\title[Partition Function and Applications]{Asymptotic Properties of the Partition Function and Applications in Tail Index Inference of Heavy-Tailed Data}
\author[Danijel Grahovac {\it et al.}]{Danijel Grahovac}
\address{Department of Mathematics, Josip Juraj Strossmayer University,
Osijek,
Croatia.}
\email{dgrahova@mathos.hr}
\author{Mofei Jia}
\address{Department of Economics and Management, University of Trento,
Trento,
Italy.}
\email{mofei.jia-1@unitn.it}
\author{Nikolai N. Leonenko}
\address{Cardiff School of Mathematics,  Cardiff University,
Cardiff,
UK.}
\email{LeonenkoN@cardiff.ac.uk}
\author[Danijel Grahovac {\it et al.}]{Emanuele Taufer}
\address{Department of Economics and Management, University of Trento,
Trento,
Italy.}
\email{emanuele.taufer@unitn.it}
\newtheorem{theorem}{Theorem}
\newtheorem{lemma}{Lemma}
\newtheorem{remark}{Remark}
\newtheorem{definition}{Definition}
\DeclareMathOperator*{\argmin}{arg\,min}
\DeclareMathOperator*{\plim}{plim}
\begin{document}

\begin{abstract}
The so-called partition function is a sample moment statistic based on blocks of data and it is often used in the context of multifractal processes.

It will be shown that its behaviour is strongly influenced by the tail of the distribution underlying the data either in i.i.d. and weakly dependent cases.

These results will be exploited to develop graphical and estimation methods for the tail index of a distribution. The performance of the tools proposed is analyzed and compared with other methods by means of simulations and examples.

\end{abstract}

\keywords{partition function, scaling function, heavy tails, tail index}

\section{Introduction}

The partition function, also called empirical structure function, is a special kind of moment statistic. It is well known in the context of multifractal processes were it is used to estimate the so-called scaling function to describes scaling of moments in a stochastic process. Scaling functions have characteristic concave shape if the process is multifractal and can therefore be used to detect the multifractal nature of the process (see \cite{frisch1980} and \cite{mandelbrotFC1997}).

In Section 2 we study the asymptotic properties of the partition function in the setting of weakly dependent heavy-tailed data. In particular, we are interested in the rate of growth of the partition function, measured as the power of the sample size $n$ needed to normalize a sequence of partition functions when $n\to \infty$. The limiting behaviour involves a nice interplay between two classical results from probability theory: the law of large numbers and a generalized central limit theorem.

In Section 3 we present an application of the established results in the analysis of tail index of heavy-tailed data. Heavy-tailed distributions are of considerable importance in modeling a wide range of phenomena in finance, geology, hydrology, physics, queuing theory, and telecommunications. Pioneering work was done in \cite{mandelbrot1963}, where stable distributions with index less than $2$ have been advocated for describing fluctuations of cotton prices. In the field of finance, distributions of logarithmic asset returns can often be fitted extremely well by Student's $t$-distribution (see \cite{HeydeLeonenko2005} and the references therein). There are different ways to define the class of heavy-tailed distributions. In this paper, we say that the distribution of some random variable $X$ is heavy-tailed with index $\alpha>0$ if it has a regularly varying tail with index $-\alpha$. This implies that
$$P(|X|>x)=\frac{L(x)}{x^{\alpha}},$$
where $L(t),t>0$ is a slowly varying function, that is, $L(tx)/L(x) \to 1$ as $|x| \to \infty$, for every $t>0$. In particular, this implies that $E|X|^q < \infty$ for $q<\alpha$ and $E|X|^q=\infty$ for $q>\alpha$, which is sometimes also used to define heavy tails. The parameter $\alpha$ is called the tail index and measures the "thickness" of the tails.

We propose a graphical method that can be used as an exploratory method in order to detect heavy tails in the data. The graphs are used to detect the range of the tail index of the underlying distribution. Also, we establish an estimation method for the unknown tail index. Tail index estimators are usually based on upper order statistics and their asymptotic properties. As an alternative, \cite{meerschaert1998} proposed an estimator based on the asymptotics of the sum. In a certain way, underlying idea of our method is also based on the asymptotic properties of the sum. Our approach is however more general and independent of the results of \cite{meerschaert1998}. As we will see, the blocking structure of the partition function enables extracting  more information about the tail index. Moreover, we go beyond the i.i.d. case and consider weakly dependent samples. In Section 4 we present several examples based on simulated and real world data that illustrate the performance of the methods proposed. Section \ref{proofs} contains the proofs.

\section{The partition function}
Suppose we have a sample $X_1,\dots,X_n$ coming from a strictly stationary stochastic process $X_t, t \in \mathbb{Z}_+$ (discrete time) or $X_t, t \in \mathbb{R}_+$ (continuous time) which has a heavy-tailed marginal distribution with tail index $\alpha$. The partition function can be defined as follows:
\begin{equation}\label{partitionfunction}
S_q(n,t)=\frac{1}{\lfloor n/t \rfloor} \sum_{i=1}^{\lfloor n/t \rfloor} \left| \sum_{j=1}^{\lfloor t \rfloor}  X_{\lfloor t \rfloor(i-1)+j} \right|^q,
\end{equation}
where $q>0$ and $1 \leq t \leq n$. In words, we partition the data into consecutive blocks of length $\lfloor t \rfloor$, we sum each block and take the power $q$ of the absolute value of the sum. Finally, we average over all $\lfloor n/t \rfloor$ blocks. Notice that for $t=1$ one gets the usual empirical $q$-th absolute moment.

The partition function can also be viewed as a natural estimator of the $q$-th absolute moment of the stationary increment process. Indeed, suppose $\{Y_t\}$ is a process with stationary increments and one tries to estimate $E|Y(t)|^q$, for fixed $t>0$ based on a discretely observed sample $Y_1,\dots,Y_n$. It is natural to consider
\begin{equation*}
\frac{1}{\lfloor n/t \rfloor} \sum_{i=1}^{\lfloor n/t \rfloor} \left| Y_{i \lfloor t \rfloor } - Y_{(i-1) \lfloor t \rfloor }\right|^q.
\end{equation*}
If we denote one step increments as $X_i=Y(i)-Y(i-1)$, then this is equal to \eqref{partitionfunction}. When $\{Y_t\}$ is a L\'evy process, $X_1,\dots,X_n$ are independent identically distributed.

In studying the asymptotic properties of $S_q(n,t)$ we go beyond the i.i.d. case and consider $X_t$, $t \geq 0$ to be a strictly stationary process which satisfies a strong mixing condition with an exponentially decaying rate. More precisely, for two sub-$\sigma$-algebras, $\mathcal{A} \subset \mathcal{F}$ and $\mathcal{B} \subset \mathcal{F}$ on the same complete probability space $(\Omega, \mathcal{F}, P)$ define
$$a(\mathcal{A}, \mathcal{B}) = \sup\limits_{A \in \mathcal{A}, B \in \mathcal{B}}{\left| P(A \cap B) - P(A)P(B) \right|}.$$
Now for a process, $X_t$, $t \geq 0$, consider $\mathcal{F}_{t} = \sigma\{X_{s}, s \leq t\}, \ \mathcal{F}^{t+\tau} = \sigma\{X_{s}, s \geq t+\tau\}$. We say that $\{X_t\}$ has a strong mixing property if $a(\tau)= \sup_{t \geq 0} a(\mathcal{F}_{t}, \mathcal{F}^{t+\tau}) \allowbreak \to 0$ as $\tau \to \infty$. Strong mixing is sometimes also called $\alpha$-mixing. If $a(\tau)=O(e^{-b \tau})$ for some $b>0$ we say that the strong mixing property has an exponentially decaying rate. We note that results like Theorem 1 could probably be proven under some other weak form of dependence, but this form will cover a large variety of examples.

Asymptotic properties of $S_q(n,t)$ have been considered before in the context of multifractality detection (\cite{Sly2005}, \cite{Heyde2009}; see also \cite{HeydeandSly2008}). Instead of keeping $t$ fixed, we take it to be of the form $t=n^s$ for some $s\in (0,1)$, which allows the blocks to grow as the sample size increases. It is clear that then $S_q(n,n^s)$ will diverge since $s>0$. We are interested in the rate of divergence of this statistic, i.e., we consider the limiting behaviour of $\ln S_q(n,n^s) / \ln n$. One can think of the limiting value as the smallest power of $n$ needed to normalize the partition function such that  it will converge to some random variable not identically equal to zero.

The next theorem summarizes the main results on the rate of growth. We additionally assume that the sample has zero expectation in case it is finite. For practical purposes, this is not a restriction as one can always demean the starting sequence. For the case $\alpha\leq1$ this is not necessary. A special case of this theorem has been proved in \cite{Sly2005} and cited in \cite{Heyde2009}. The proof of the theorem is given in Section \ref{proofs}.

\begin{theorem}\label{thm:main}
Suppose $X_i, i \in \mathbb{N}$ is a strictly stationary sequence that has a strong mixing property with an exponentially decaying rate and suppose that $X_i, i \in \mathbb{N}$ has a heavy-tailed marginal distribution with tail index $\alpha>0$. Suppose also that $EX_i=0$ when $\alpha>1$. Then for $q>0$ and every $s \in (0,1)$
\begin{equation}\label{Rqs}
\frac{\ln S_q(n,n^s)}{\ln n} \overset{P}{\to} R_{\alpha}(q,s) :=
\begin{cases}
\frac{sq}{\alpha}, &\text{if } q \leq \alpha \text{ and } \alpha \leq 2,\\
s+\frac{q}{\alpha}-1, &\text{if } q > \alpha \text{ and } \alpha \leq 2,\\
\frac{sq}{2}, &\text{if } q \leq \alpha \text{ and } \alpha > 2,\\
\max\left\{ s+\frac{q}{\alpha}-1, \frac{sq}{2} \right\}, &\text{if } q > \alpha \text{ and } \alpha > 2,
\end{cases}
\end{equation}
as $n\to \infty$, where $\overset{P}{\to}$ stands for convergence in probability.
\end{theorem}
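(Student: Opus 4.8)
The plan is to reduce the claim to two one-sided estimates and match each to one of two natural exponents. Write $t=n^{s}$, $m=\lfloor n/t\rfloor\sim n^{1-s}$, and let $B_i=\sum_{j=1}^{\lfloor t\rfloor}X_{\lfloor t\rfloor(i-1)+j}$ denote the $i$-th block sum, so that $S_q(n,n^{s})=m^{-1}\sum_{i=1}^{m}|B_i|^{q}$. The convergence in \eqref{Rqs} is equivalent to showing, for every $\varepsilon>0$, that $P(S_q(n,n^{s})>n^{R_\alpha(q,s)+\varepsilon})\to0$ and $P(S_q(n,n^{s})<n^{R_\alpha(q,s)-\varepsilon})\to0$. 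Setting $\beta=\min\{\alpha,2\}$, I would first record the elementary identity
\[
R_\alpha(q,s)=\max\Bigl\{\tfrac{sq}{\beta},\ s+\tfrac{q}{\alpha}-1\Bigr\},
\]
valid in all four cases, the first term dominating exactly when $q\le\alpha$; the first exponent corresponds to the typical ``central-limit'' magnitude of a block sum, the second to a single atypically large block.

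The hard part, and the main technical input, is a large-deviation estimate for the block sums. I would establish that under the stated hypotheses there are constants $0<c_1\le c_2<\infty$ and a threshold $\gamma_t=t^{1/\beta+o(1)}$ with
\[
c_1\lfloor t\rfloor\,\overline{F}(x)\ \le\ P(|B_1|>x)\ \le\ c_2\lfloor t\rfloor\,\overline{F}(x),\qquad x\ge\gamma_t,
\]
where $\overline{F}(x)=P(|X_1|>x)$, together with the anti-concentration bound that $P(c\gamma_t\le|B_1|\le C\gamma_t)$ stays bounded away from $0$ for suitable $0<c<C<\infty$. For i.i.d.\ data this is the classical large-deviation theory for sums of regularly varying summands (S.\,V.~Nagaev); under exponential strong mixing one must invoke or adapt its weakly dependent analogue, the delicate point being to keep the upper bound valid for $x$ up to the polynomial scales $x\asymp n^{\theta}$ that enter below. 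I expect this is where the dependence hypothesis does its real work; when $\alpha>1$ the assumption $EX_i=0$ is used precisely to keep $\gamma_t$ of order $t^{1/\beta}$ rather than $t$. Standard covariance and Rosenthal-type inequalities for exponentially mixing sequences would also be needed, to get $EB_1^{2}\asymp t$ when $\alpha>2$ and to handle the averaging over blocks.

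For the upper bound I would fix a small $\delta>0$, put $\theta=\max\{1/\alpha,\ s/\beta\}$ and $T_n=n^{\theta+\delta}$, and split $S_q(n,n^{s})$ according to whether $|B_i|\le T_n$. Since $\theta\alpha\ge1$, a union bound with the upper tail estimate gives $P(\max_i|B_i|>T_n)\le mP(|B_1|>T_n)\lesssim n^{1-\theta\alpha-\alpha\delta+o(1)}\to0$, so the contribution of the large blocks vanishes with probability tending to one. For the remaining sum I would bound $E[|B_1|^{q}\mathbf{1}_{\{|B_1|\le T_n\}}]=\int_0^{T_n^{q}}P(|B_1|>x^{1/q},\,|B_1|\le T_n)\,dx$ by splitting the integral at $x\asymp\gamma_t^{q}$, using $P\le1$ below that point and the tail estimate above it; this yields a bound of order $n^{R_\alpha(q,s)+c\delta}$, and then Markov's inequality gives $P(m^{-1}\sum_i|B_i|^{q}\mathbf{1}_{\{|B_i|\le T_n\}}>n^{R_\alpha(q,s)+\varepsilon})\to0$ once $\delta$ is small enough. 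Letting $\delta\downarrow0$ finishes this direction.

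For the lower bound I would produce the two exponents separately. For $sq/\beta$: bound $S_q(n,n^{s})\ge m^{-1}\sum_i(c\gamma_t)^{q}\mathbf{1}_{\{c\gamma_t\le|B_i|\le C\gamma_t\}}$, note that the indicator probability is bounded below by the anti-concentration bound, and invoke a weak law of large numbers for the block indicators (Chebyshev with the mixing covariance bound, after deleting $O(\log n)$ terms at the end of each block so that the blocks become asymptotically independent, a change invisible at the logarithmic scale); this gives $S_q(n,n^{s})\gtrsim\gamma_t^{q}=n^{sq/\beta+o(1)}$ with probability tending to one. For $s+q/\alpha-1$ (only needed when it exceeds $sq/\beta$, in which regime $x_n:=n^{1/\alpha-\delta}\ge\gamma_t$ for small $\delta$, so the tail estimate applies): bound $S_q(n,n^{s})\ge m^{-1}\max_i|B_i|^{q}$, observe $E\,\#\{i:|B_i|>x_n\}=mP(|B_1|>x_n)\gtrsim n^{\alpha\delta+o(1)}\to\infty$, and run a second-moment argument (the mixing bound for distant pairs of blocks, the trivial bound for nearby ones) to conclude that this count is positive with probability tending to one, so $\max_i|B_i|\ge x_n$ and $S_q(n,n^{s})\gtrsim n^{s+q/\alpha-1-q\delta}$. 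Combining the two lower bounds and letting $\delta\downarrow0$ gives $S_q(n,n^{s})\ge n^{R_\alpha(q,s)-\varepsilon}$ with probability tending to one, which completes the argument.
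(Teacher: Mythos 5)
Your overall architecture is a genuinely different route from the paper's: you push everything through a two-sided Nagaev-type tail estimate for the block sum $B_1$, whereas the paper never estimates the tail of $B_1$ directly. For its upper bound the paper truncates the \emph{individual} summands at $n^{1/\alpha+\delta}$, centers them, and applies a Rosenthal inequality for strongly mixing sequences (Lemma \ref{lemma:Rosenthal}) to bound $E\bigl|\sum_j Z_{j,n}\bigr|^q$ by $\max\{L(q,\zeta,\lfloor n^s\rfloor),(L(2,\zeta,\lfloor n^s\rfloor))^{q/2}\}$, which a Karamata computation turns into $n^{\max\{s+q/\alpha-1+\delta q,\,sq/2\}}$; Markov plus a union bound on $\{\max_i|X_i|>n^{1/\alpha+\delta}\}$ then closes that direction. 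For the lower bound the paper uses the CLT for mixing sequences when the $sq/2$ branch dominates, and otherwise the asymptotics of $\max_{j\le n}|X_j|\sim n^{1/\alpha}$ together with a Rosenthal/Markov control of the remaining terms in the block containing the maximum, plus an ergodic theorem for the indicator sums. Your identity $R_\alpha(q,s)=\max\{sq/\beta,\,s+q/\alpha-1\}$ is correct and unifies the four cases nicely, and your exponent bookkeeping (e.g.\ that $n^{1/\alpha-\delta}\ge\gamma_t$ precisely in the regime where the second branch dominates) checks out.

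The genuine gap is the central lemma you lean on: the uniform upper bound $P(|B_1|>x)\le c_2\lfloor t\rfloor\,\overline{F}(x)$ for all $x\ge t^{1/\beta+o(1)}$. In the i.i.d.\ case this is Nagaev's theorem, but under the theorem's hypotheses --- exponential strong mixing plus regular variation of the \emph{one-dimensional marginal} only --- it is not an off-the-shelf result and you give no argument for it. The known precise large-deviation results for dependent regularly varying sequences require joint regular variation of the finite-dimensional distributions together with anti-clustering and vanishing-small-values conditions, none of which are assumed here; strong mixing constrains long-range dependence but says nothing about how large values cluster \emph{within} a single block, which is exactly what governs $P(|B_1|>x)$ at these scales. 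Since your entire upper bound (the truncated-moment integral split at $\gamma_t^q$) is driven by this estimate, the proof as proposed does not close. The repair is essentially the paper's observation that at logarithmic precision you never need the tail of $B_1$: a Rosenthal inequality for mixing sequences applied to truncated, centered summands already gives $E\bigl[|B_1|^q\mathbf{1}(\cdot)\bigr]$ of the right order for the upper bound, and the lower bound only needs the easy one-big-jump direction (one summand exceeds $2x$ while the rest of the block is controlled by Chebyshev/Rosenthal), which is available under strong mixing. Your second-moment exceedance count and trimmed-block weak law for the lower bound are fine once that substitution is made.
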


To explain the effects of the theorem, consider the simple case in which $X_1,X_2,\dots$ is a zero mean (if $\alpha>1$) i.i.d. sequence which is in the domain of normal attraction of some $\alpha$-stable random variable. This means that $\sum_{i=1}^n X_i / n^{1/\alpha}$ converges in distribution to some $Y$ with $\alpha$-stable distribution. Suppose first that $q<\alpha$. Consider
$$\frac{S_q(n,n^s)}{n^{\frac{sq}{\alpha}}}=\frac{1}{\lfloor n^{1-s} \rfloor} \sum_{i=1}^{\lfloor n^{1-s} \rfloor} \left| \frac{\sum_{j=1}^{\lfloor n^s \rfloor} X_{\lfloor n^s \rfloor(i-1)+j} }{n^{\frac{s}{\alpha}}}\right|^q.$$
When $n \to \infty$, each of the internal sums converges in distribution to an independent copy of $Y$. Since $q < \alpha$, $E|Y|^q$ is finite, so the weak law of large numbers applies and shows that the average tends to some non-zero and finite limit. For the case $q>\alpha$, the weak law cannot be applied, so the rate of growth will be higher, i.e.,
$$\frac{S_q(n,n^s)}{n^{s+\frac{q}{\alpha}-1}}=\frac{ \sum_{i=1}^{\lfloor n^{1-s} \rfloor} \left| \frac{\sum_{j=1}^{\lfloor n^s \rfloor} X_{\lfloor n^s \rfloor(i-1)+j} }{n^{\frac{s}{\alpha}}}\right|^q}{n^{(1-s) \frac{q}{\alpha}}}.$$
Internal sums again converge to independent copies of $Y$. Since $|Y|^q$ has $-\alpha/q$ regularly varying tail, it will be in the domain of normal attraction of $(\alpha/q)$-stable distribution, so the limit will be some positive random variable.

For the case $\alpha>2$, the variance is finite so the central limit theorem holds. When $q<\alpha$ the rate of growth has an intuitive explanation by arguments similar to those just mentioned above. When $q>\alpha$, interesting things happen. Note that asymptotics of the partition function is influenced by two things: averaging and the weak law on the one side and distributional limit arguments on the other side. It will depend on $s$ which of the two influences prevails. For larger $s$, $s+q/\alpha-1 < sq/2$ and the rate will be as in the case $2<\alpha <q$, i.e.,
$$\frac{S_q(n,n^s)}{n^{\frac{sq}{2}}}=\frac{1}{\lfloor n^{1-s} \rfloor} \sum_{i=1}^{\lfloor n^{1-s} \rfloor} \left| \frac{\sum_{j=1}^{\lfloor n^s \rfloor} X_{\lfloor n^s \rfloor(i-1)+j} }{n^{\frac{s}{2}}}\right|^q.$$
Internal sums converge in distribution to normal, which has all the moments finite and the weak law applies. But for small $s$, the rate will be the same as that for the case $\alpha<2$. What happens is that in this case internal sums have a small number of terms, so convergence to normal is slow, much slower than the effect of averaging. This is the reason why the rate is greater than $sq/2$.

\begin{remark}\label{remark1}
Note that in general, the normalizing sequence for partial sums can be of the form $n^{1/\alpha} L(n)$ for some slowly varying function $L$. This does not affect the rate of growth. Indeed, if $Z_n/n^aL(n) \overset{d}{\to} Z$ for some non-negative sequence $Z_n$, then for every $\varepsilon>0$,
\begin{align*}
&P \left( \frac{\ln Z_n}{\ln n} < a - \varepsilon \right) = P \left( Z_n < n^{a - \varepsilon} \right) =\\
&\quad P \left( \frac{Z_n}{n^aL(n)} < \frac{1}{L(n) n^{\varepsilon}} \right) \leq P \left( \frac{Z_n}{n^aL(n)} < \frac{1}{2 n^{\varepsilon}} \right) \to 0,
\end{align*}
since for $n$ large enough $n^{-\varepsilon} < L(n) < n^{\varepsilon}$, i.e. $\ln L(n) / \ln n \to 0$. Similar results can be obtained for the upper bound. The converse also holds, i.e., if $\ln Z_n / \ln n \overset{P}{\to} a$, then for some function $M$ such that $\ln M(n) / \ln n \to 0$ and $Z_n/n^a M(n) \overset{d}{\to} Z$ where $Z$ is a random variable not identically equal to zero.
\end{remark}

\begin{remark}
A natural question arises from the previous discussion, whether it is possible to identify a normalizing sequence and a distributional limit of $S_q(n,n^s)$. In some special cases the limit can be easily deduced. Suppose $X_i, i \in \mathbb{N}$ is an i.i.d. sequence with $\alpha$-stable distribution. When $q \leq \alpha$, the rate of growth will be $sq/\alpha$. Dividing the partition function with $n^{\frac{sq}{\alpha}}$ and using the scaling property of stable distributions yields
$$\frac{S_q(n,n^s)}{n^{\frac{sq}{\alpha}}}=\frac{1}{\lfloor n^{1-s} \rfloor} \sum_{i=1}^{\lfloor n^{1-s} \rfloor} \left| \frac{\sum_{j=1}^{\lfloor n^s \rfloor} X_{\lfloor n^s \rfloor(i-1)+j} }{n^{\frac{s}{\alpha}}}\right|^q  \overset{d}{=}  \frac{1}{\lfloor n^{1-s} \rfloor} \sum_{i=1}^{\lfloor n^{1-s} \rfloor} \left| X_i \right|^q.$$
Since $q \leq \alpha$, $E|X_i|^q < \infty$, so by the weak law of large numbers implies
$$\frac{S_q(n,n^s)}{n^{\frac{sq}{\alpha}}} \overset{P}{\to} E|X_1|^q, \quad n \to \infty.$$
On the other hand, when $q>\alpha$ weak law cannot be applied and the rate of growth is $s+\frac{q}{\alpha}-1$. Normalizing the partition function gives
$$\frac{S_q(n,n^s)}{n^{s+\frac{q}{\alpha}-1}}=\frac{ \sum_{i=1}^{\lfloor n^{1-s} \rfloor} \left| \frac{\sum_{j=1}^{\lfloor n^s \rfloor} X_{\lfloor n^s \rfloor(i-1)+j} }{n^{\frac{s}{\alpha}}}\right|^q}{n^{(1-s) \frac{q}{\alpha}}} \overset{d}{=} \frac{ \sum_{i=1}^{\lfloor n^{1-s} \rfloor} \left| X_i \right|^q}{n^{(1-s) \frac{q}{\alpha}}}.$$
$|X_i|^q$ have $-\alpha/q$ regularly varying tail, it will be in the domain of normal attraction of $(\alpha/q)$-stable distribution. $\alpha/q<1$, so centering is not needed and by generalized central limit theorem it follows that
$$\frac{S_q(n,n^s)}{n^{s+\frac{q}{\alpha}-1}} \overset{d}{\to} Y, \quad n \to \infty,$$
with $Y$ having $\alpha/q$-stable distribution. We leave studying of different limit types to be part of future research.
\end{remark}

\section{Applications in heavy tail inference}
In this section we discuss possible applications of the results developed in Section 2. First we develop the notion and properties of scaling functions. We use this to propose a graphical method that can be used to investigate the nature of tails of underlying distribution. Also we define an estimator of the unknown tail index.

\subsection{Scaling function}
The definition of the scaling function is based on the notion of partition function. For fixed $q$, we simply define the scaling function as the slope of the simple linear regression (with intercept) of $\ln S_q(n,n^s) / \ln n$ on $s$. Using the well known formula for the slope of linear regression line, we can formally state:
\begin{definition}
For $q>0$, the (empirical) scaling function at point $q$ is defined as
\begin{equation}\label{tauhat}
\hat{\tau}_{N,n}(q) = \frac{\sum_{i=1}^{N-1} \frac{i}{N} \frac{\ln S_q(n,n^{\frac{i}{N}})}{\ln n} - \frac{1}{N-1} \sum_{i=1}^{N-1} \frac{i}{N} \sum_{j=1}^{N-1} \frac{\ln S_q(n,n^{\frac{i}{N}})}{\ln n} }{ \sum_{i=1}^{N-1} \left(\frac{i}{N}\right)^2 - \frac{1}{N-1} \left( \sum_{i=1}^{N-1} \frac{i}{N} \right)^2 },
\end{equation}
where $N \in \mathbb{N}$.
\end{definition}
The definition can be justified by the following arguments. Using the notation of Theorem \ref{thm:main}, by Remark \ref{remark1}, there exists a function $M$ such that $\ln M(n) / \ln n \to 0$ and
$$\varepsilon_n:=\frac{S_q(n,n^s)}{n^{R_{\alpha}(q,s)} M(n)} \overset{d}{\to} \varepsilon,$$
where $\varepsilon$ is a random variable not identically equal to zero. This can be rewritten as
\begin{equation*}
\ln S_q(n,n^s) = R_{\alpha}(q,s) \ln n + \ln M(n) + \ln \varepsilon_n,
\end{equation*}
i.e.,
\begin{equation*}\label{regression}
\frac{\ln S_q(n,n^s)}{\ln n} = R_{\alpha}(q,s) + \frac{ \ln M(n)}{\ln n} + \frac{\ln \varepsilon_n}{\ln n}.
\end{equation*}
The term $\ln \varepsilon_n / \ln n$ can be considered as an error term in the regression of $\ln S_q(n,n^s) / \ln n$ on $q$ and $s$ with model function $R_{\alpha}(q,s)$. One should count on the intercept in the model, since for smaller $n$ the term $\ln M(n)/ \ln n$ may be non negligible. The possible nonzero mean of an error can be subtracted and considered as a part of the intercept. This complicated bivariate nonlinear regression can be simplified by noticing that the model function $R_{\alpha}(q,s)$ is approximately linear in $s$. Indeed, in the case $\alpha\leq 2$, the model function is exactly linear in $s$, while in case $\alpha>2$, this holds only approximately due to the maximum term when $q>\alpha$. Nonetheless, it makes sense to consider linear regression slope as the definition of the scaling function. $N$ in equation \eqref{tauhat} determines the number of data points used in regression.

Using Theorem \ref{thm:main} we can establish asymptotic properties of the scaling function.

\begin{theorem}\label{thm:tauhatasymptotic}
Suppose that the assumptions of the Theorem \ref{thm:main} hold. Then, for every $q>0$,
\begin{equation*}
\lim_{N \to \infty} \plim_{n \to \infty} \hat{\tau}_{N,n} (q) = \tau(q),
\end{equation*}
where $\plim$ stands for limit in probability and
\begin{equation}\label{tau}
\tau(q)=
\begin{cases}
\frac{q}{\alpha}, & \text{if } 0<q\leq \alpha \ \& \ \alpha\leq 2,\\
1, & \text{if } q>\alpha \ \& \ \alpha\leq 2,\\
\frac{q}{2}, & \text{if } 0<q\leq \alpha \ \& \ \alpha> 2,\\
\frac{q}{2}+\frac{2(\alpha-q)^2 (2\alpha+4q-3\alpha q)}{\alpha^3 (2-q)^2}, & \text{if } q>\alpha \ \& \ \alpha> 2.
\end{cases}
\end{equation}
\end{theorem}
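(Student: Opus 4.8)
The plan is to take the two limits in order: first $n\to\infty$ with $N$ fixed, then $N\to\infty$. For fixed $N$, notice that $\hat\tau_{N,n}(q)$ in \eqref{tauhat} is a fixed, $n$-independent linear combination of the $N-1$ random variables $\ln S_q(n,n^{i/N})/\ln n$, $i=1,\dots,N-1$. By Theorem \ref{thm:main} each of these converges in probability to the constant $R_\alpha(q,i/N)$; since the limiting vector is deterministic, marginal convergence in probability implies joint convergence in probability, and applying the continuous (in fact linear) map yields
$$\plim_{n\to\infty}\hat\tau_{N,n}(q)=\tau_N(q):=\frac{\sum_{i=1}^{N-1}\frac{i}{N}R_\alpha(q,\tfrac iN)-\frac{1}{N-1}\sum_{i=1}^{N-1}\frac iN\sum_{j=1}^{N-1}R_\alpha(q,\tfrac jN)}{\sum_{i=1}^{N-1}\bigl(\tfrac iN\bigr)^2-\frac{1}{N-1}\bigl(\sum_{i=1}^{N-1}\tfrac iN\bigr)^2},$$
the slope of the least-squares line through the points $(i/N,\,R_\alpha(q,i/N))$, $i=1,\dots,N-1$.

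Next I would let $N\to\infty$. Dividing numerator and denominator of $\tau_N(q)$ by $N-1$ turns every sum into a Riemann sum over a uniform partition of $[0,1]$; since $s\mapsto R_\alpha(q,s)$ is continuous, piecewise linear and bounded on $[0,1]$, these converge to the corresponding integrals. With $\int_0^1 s\,ds=\tfrac12$ and $\int_0^1 s^2\,ds=\tfrac13$ this gives
$$\tau(q)=\lim_{N\to\infty}\tau_N(q)=\frac{\int_0^1\bigl(s-\tfrac12\bigr)R_\alpha(q,s)\,ds}{\int_0^1\bigl(s-\tfrac12\bigr)^2\,ds}=12\int_0^1\Bigl(s-\tfrac12\Bigr)R_\alpha(q,s)\,ds,$$
i.e.\ $\tau(q)$ is exactly the $L^2([0,1])$ regression slope of the deterministic model function $R_\alpha(q,\cdot)$, matching the heuristic preceding the theorem.

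It then remains to evaluate this integral in the four regimes of \eqref{Rqs}. In the first three cases $R_\alpha(q,\cdot)$ is affine in $s$ — equal to $sq/\alpha$, to $s+q/\alpha-1$, and to $sq/2$ respectively — so the integral is immediate and reproduces $q/\alpha$, $1$ (the constant part integrates against $s-\tfrac12$ to zero, leaving the slope), and $q/2$. The only case needing real work is $q>\alpha>2$, where $R_\alpha(q,s)=\max\{s+q/\alpha-1,\,sq/2\}$. Here I would first locate the crossover of the two lines at $s^\ast=\frac{2(q-\alpha)}{\alpha(q-2)}$, which lies in $(0,1)$ precisely because $\alpha>2$; on $[0,s^\ast]$ the branch $s+q/\alpha-1$ dominates and on $[s^\ast,1]$ the branch $sq/2$ dominates. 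Writing $R_\alpha(q,s)=\tfrac{sq}{2}+\bigl(\tfrac q2-1\bigr)(s^\ast-s)\,\mathbf 1_{[0,s^\ast]}(s)$ splits $\tau(q)$ into the third-case value $q/2$ plus the correction $12\bigl(\tfrac q2-1\bigr)\int_0^{s^\ast}(s-\tfrac12)(s^\ast-s)\,ds=\bigl(\tfrac q2-1\bigr)(s^\ast)^2(2s^\ast-3)$; substituting $s^\ast$ and simplifying produces the stated closed form $\frac q2+\frac{2(\alpha-q)^2(2\alpha+4q-3\alpha q)}{\alpha^3(2-q)^2}$.

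Apart from keeping the algebra straight in this last case, which is routine, the only genuinely substantive point is the legitimacy of interchanging the limits: this is not a problem here because the inner limit is convergence in probability to the deterministic quantity $\tau_N(q)$ and the outer limit is an ordinary deterministic limit, so no uniformity in $n$ is required. I expect the algebraic simplification in the case $q>\alpha>2$ to be the main — and essentially the only — obstacle.
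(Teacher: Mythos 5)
Your proposal is correct and follows essentially the same route as the paper: the inner limit $n\to\infty$ is handled by Theorem \ref{thm:main} plus continuity of the least-squares slope as a function of the $N-1$ points, and the outer limit $N\to\infty$ by recognizing the sums as Riemann sums converging to $12\int_0^1(s-\tfrac12)R_\alpha(q,s)\,ds$. Your explicit evaluation in the case $q>\alpha>2$ (crossover at $s^\ast=\tfrac{2(q-\alpha)}{\alpha(q-2)}$, correction $(\tfrac q2-1)(s^\ast)^2(2s^\ast-3)$) is correct and in fact supplies detail that the paper omits.
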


Theorem shows that, asymptotically, shape of the scaling function calculated based on heavy-tailed data significantly depends on the value of the tail index $\alpha$. Plots of the asymptotic scaling function $\tau(q)$ for different values of $\alpha$ are shown on Figure \ref{tau plot}. When $\alpha \leq 2$, the scaling function is bilinear. In this case, first part of the plot is a line with slope $1/\alpha > 1/2$ and second part is a horizontal line with value $1$. A breakpoint occurs exactly at point $\alpha$. In case $\alpha>2$, $\tau(q)$ is approximately bilinear, the slope of the first part is $1/2$ and again the breakpoint is at $\alpha$. When $\alpha$ is large, i.e., $\alpha \to \infty$, it follows from \eqref{tau} that $\tau(q)\equiv q/2$. This case corresponds to data coming from a distribution with all moments finite, e.g., an independent normally distributed sample. This line will be referred to as the baseline. On figure \ref{tau plot} the baseline is shown by a dashed line. The case $\alpha\leq 2 ~(\alpha=0.5,1.0,1.5)$ and $\alpha>2~(\alpha=2.5,3.0,3.5,4.0)$ are shown by dot-dashed and solid lines, respectively.

\begin{figure}[H]
\caption{Plots of scaling function $\tau(q)$ against the moment $q$}
\label{tau plot}
\centering
\includegraphics[width=.55\textwidth]{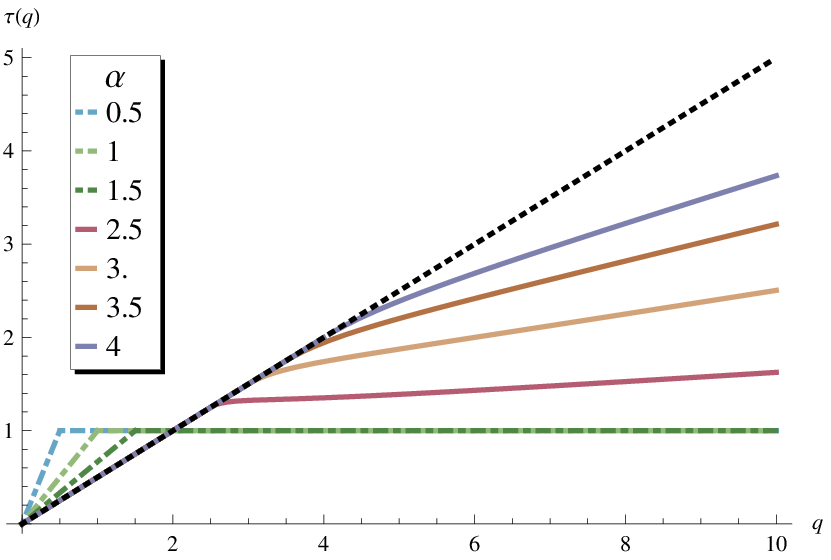}
\end{figure}

\subsection{Graphical method}

In this section we propose a graphical method useful for exploratory analysis of tails of the underlying distribution of the data. Since the scaling function shape is strongly influenced by the tail index value, this motivates the use of plots of the empirical scaling function to detect the tail index of a distribution. In particular the asymptotic results suggest that there should be sharp differences between the plots for distributions with infinite variance ($\alpha <2)$ and the others ($\alpha > 2$).

Based on a finite sample and chosen $N$, one can estimate the scaling function by equation \eqref{tauhat} for fixed value of $q$. Repeating this for a range of $q$ values makes it possible to give a plot of empirical scaling function $\hat{\tau}_{N,n}$.

By examining the plot and comparing it with the baseline it is possible to say something about the nature of tails of underlying distribution. If $\hat{\tau}_{N,n}(q)$ is above the baseline for $q<2$ and nearly horizontal afterward then true $\alpha$ is probably less than $2$. By examining the point where the graph breaks, one can roughly estimate the interval containing $\alpha$. If $\hat{\tau}_{N,n}(q)$ coincides with the baseline for $q<2$ and diverges from it after $q>2$, then true $\alpha$ is probably greater than $2$. The point at which deviation starts can be an estimate for $\alpha$. This also establishes a graphical method for distinguishing two cases, whether $\alpha\leq 2$ or $\alpha>2$.

If the graph coincides with the baseline, then we can suspect that the data does not exhibit heavy tails and that the moments are finite for the considered range of $q$. This way one can distinguish between heavy tails or not. We illustrate how the method works on simulated and real world examples in the next section.

\subsection{Estimation method}

Besides the graphical method, a simple estimation method for the unknown tail index can also be established based on asymptotic behaviour of the scaling function. Estimation of the tail index is a well known problem and there has been a range of estimators proposed. Probably the most popular estimator is the one proposed by \cite{hill1975}. Pickands and moment estimator are also heavily used. A nice survey of these estimators and their properties can be found in \cite{embrechts1997modelling} and \cite{dehaan2006}; see also \cite{meerschaert2003}. Various extensions and improvements of these estimators have been made making them suitable for data beyond i.i.d. case. As follows from the assumptions of Theorem \ref{thm:main}, the estimator defined here should work well for stationary strong mixing samples, thus extending the problem from the simplest i.i.d. case.

The basic idea of the method is to estimate $\alpha$ by fitting the empirical scaling function to the asymptotic form $\tau(q)$. This is done by the means of ordinary least squares. More precisely, for points $q_i \in (0,q_{max}), \ i=1,\dots,m$, calculate $\hat{\tau}_i=\hat{\tau}_{N,n}(q_i)$ using the Equation \eqref{tauhat}. Estimator is defined as
\begin{equation}\label{alphaMethod}
\hat{\alpha}=\argmin_{\alpha \in (0,\infty)} \sum_{i=1}^m (\hat{\tau}_i - \tau(q_i))^2.
\end{equation}
For practical reasons, due to the complexity of the expression for $\tau(q)$, method is divided into two cases: $\alpha \leq 2$ and $\alpha > 2$; i.e., corresponding part of $\tau(q)$ is used as a model function in \eqref{alphaMethod}, depending where the true value of $\alpha$ is. Therefore it is necessary to first detect whether we are in the case of infinite variance or not. This can be accomplished by using the graphical method described above. In the inconclusive case, it is advisable to compute both estimates and compare the quality of the fit.

\subsection{Plots of the empirical scaling functions}
The shape of the empirical scaling function is not always ideal as its asymptotic form. However, most plots are very close to their theoretical form. To illustrate this, we simulate $10$ independent samples of size $1000$ in six different settings. The first three cases studied are i.i.d. samples and others are stationary and weakly dependent, in accordance to the assumptions in Theorem \ref{thm:main}. Figure \ref{Ex1} summarizes the plots of the empirical scaling functions (dotted) together with the corresponding asymptotic form (solid) and the baseline (dot-dashed).

The first group of samples is generated from a stable distribution with stable index equal to $1$. The second one is generated from a Student $t$-distribution with $3$ degrees of freedom, a parameter that corresponds to the tail index. Recall that the probability density function of Student's $t$-distribution $T(\nu, \delta, \mu)$ is
\begin{equation}\label{pdf t}
\mathrm{student}[\nu, \delta, \mu](x)=\frac{\Gamma(\frac{\nu+1}{2})}{\delta\sqrt{\pi}\Gamma(\frac{\nu}{2})}\left(1+\left(\frac{x-\mu}{\delta}\right)^2\right)^{-\frac{\nu+1}{2}},~~~~~x\in \mathbb{R},
\end{equation}
where $\delta>0$ is the scaling parameter, $\nu$ the tail parameter (usually called degrees of freedom) and $\mu\in \mathbb{R}$ the location parameter. Figures \ref{Ex1a} and \ref{Ex1b} show that for both stable and Student case the empirical scaling functions are close to their theoretical form. Both plots are approximately bilinear and by identifying the breakpoint one can roughly guess the tail index value. Also, it is clear from shapes of the empirical scaling functions that variance is infinite in the first case and finite in the second. The third sample is generated from a standard normal distribution. From Figure \ref{Ex1c} one can surely doubt the existence of heavy-tails in these samples since the empirical scaling functions almost coincide with the baseline $q/2$. This shows that estimated scaling function have potential of providing self contained characterization of the tail.

Examples shown on Figures \ref{Ex1d}-\ref{Ex1f} are based on dependent data. Dependent samples are generated as sample paths of two types of stochastic processes: Ornstein-Uhlenbeck (OU) type processes and diffusions. Recall that a stochastic process $X=\{X_t, t\ge0 \}$ is said to be of OU type if it satisfies a SDE of the form
\begin{equation}\label{SDE ou}
\mathrm{d}X_t=-\lambda{X_t}\mathrm{d}t+\mathrm{d}L_{\lambda t}, \quad t\geq 0,
\end{equation}
where $L=\{L_t, t\ge0 \}$ is the background driving L\'evy process (BDLP) and $\lambda>0$. We consider strictly stationary solutions of SDE \eqref{SDE ou}. The $\alpha$-stable OU type process with parameter $\lambda>0$ and $0<\alpha<2$ introduced by \cite{Doob1942} is the solution of the SDE \eqref{SDE ou}, with $L=\{L_t, t \geq 0\}$ the standard $\alpha$-stable L\'evy motion (\cite{JanickiandWeron1994}). Since the distribution of increments for the BDLP $L$ is known in this case, we consider the Euler's scheme of simulation by replacing differentials in Equation \eqref{SDE ou} with differences. Student OU type process has been introduced in \cite{HeydeLeonenko2005}. It can be shown that there exists a strictly stationary stochastic process $X=\{X_t, t \geq 0\}$, which has a marginal $t$-distribution $T(\nu,\delta,\mu)$ with density function \eqref{pdf t} and BDLP $L$ such that \eqref{SDE ou} holds for arbitrary $\lambda>0$. This stationary process $X$ is referred to as the Student OU type process. Moreover, the cumulant transform of BDLP $L$ can be expressed as
\begin{equation*}\label{cumulant tran}
\kappa_{L_1}(\zeta)=\log{E\{e^{i\zeta{L_1}}\}}=i\zeta{\mu}-\delta|\zeta|\frac{K_{\nu/2-1}(\zeta{\mu})}{K_{\nu/2}(\zeta{\mu})}, ~~~~\zeta\in \mathbb{R},~~\zeta\ne0,
\end{equation*}
where $K$ is the modified Bessel function of the third kind and $\kappa_{L_1}(0)=0$ (\cite{HeydeLeonenko2005}). Since, for the Student OU process, the exact law of the increments of the BDLP is unknown, we use the approach introduced by \cite{LeonenkoandTaufer2009} to simulate discrete Student OU processes. This approach circumvents the problem of simulating the jumps of the BDLP and is easily applicable when an explicit expression of the cumulant transform is available; for more details see \cite{LeonenkoandTaufer2009}. Both OU processes considered can be shown to posses strong mixing property with an exponentially decaying rate (see \cite{Masuda2004}).
Last process considered is stationary Student diffusion. In order to define the Student diffusion, we introduce the stochastic differential equation (SDE):
\begin{equation}\label{sde student diff}
\mathrm{d}X_t=-\theta\left(X_t-\mu\right)\mathrm{d}t+\sqrt{\frac{2\theta\delta^2}{\nu-1}\left(1+\left(\frac{X_t-\mu}{\delta}\right)^2\right)}\mathrm{d}B_t, \quad t\geq 0,
\end{equation}
see \cite{Bibbyetal2005} and \cite{HeydeLeonenko2005}, where $\nu>1,~\delta>0, \mu\in \mathbb{R},~\theta>0$, and $B=\{B_t, t\ge0\}$ is a standard Brownian motion. SDE \eqref{sde student diff} admits a unique ergodic Markovian weak solution $X=\{X_t, t\ge0\}$ which is a diffusion process with the invariant symmetric scaled Student's $t$-distribution with probability density function \eqref{pdf t}. The diffusion process which solves the SDE \eqref{sde student diff} is called the Student diffusion. If $X_0\sim{T(\nu,\delta,\mu)}$, the Student diffusion is strictly stationary. According to \cite{LeonenkoandSuvak2010}, the Student diffusion is a strong mixing process with an exponentially decaying rate. For the simulation of paths of the Student diffusion process $X=\{X_t,t\ge0\}$ with known values of parameters, we have used the Milstein scheme (for details see \cite{Iacus2008}). Both OU processes were generated with auto-regression parameter $\lambda=1$ and diffusion was generated with $\theta=2$.

From examples on dependent data we can conclude that the shape of the scaling function is not affected with this weak form of dependence present. Again, empirical scaling functions are very near their asymptotic form.

We also report the mean of tail index values estimated by Equation \eqref{alphaMethod}: i.i.d. stable(1) - $1.29$, i.i.d. Student $t(3)$ - 3.30, i.i.d. $\mathcal{N}(0,1)$ - 4.73, stable(1) OU - 1.10, Student $t(3)$ OU - 3.34, Student $t(3)$ diffusion - 3.48.

\begin{figure}[H]
\centering
\caption{Plots of empirical scaling functions}
\label{Ex1}
\begin{subfigure}[b]{0.32\textwidth}
\centering
\includegraphics[width=\textwidth]{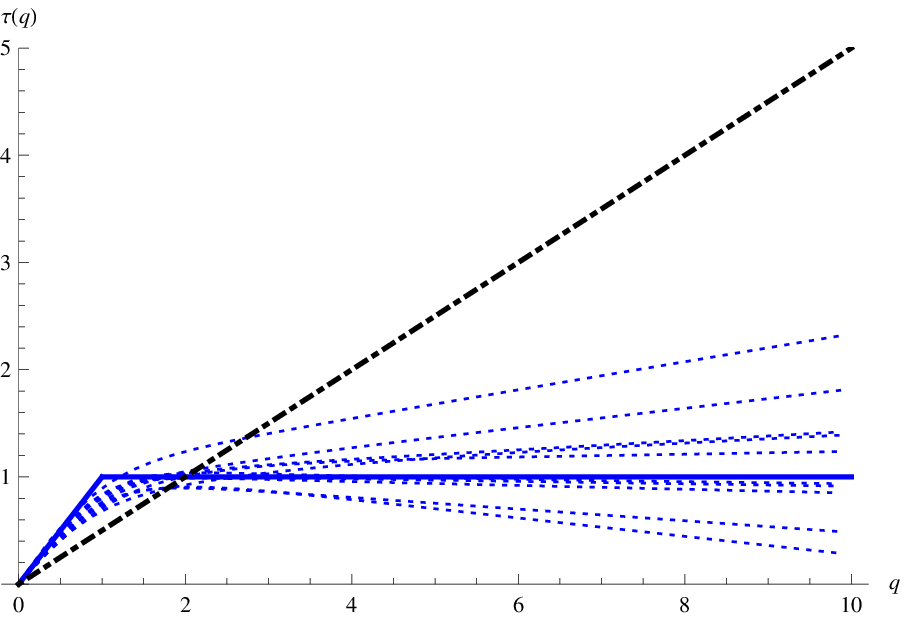}
\caption{Stable($1$) i.i.d.}
\label{Ex1a}
\end{subfigure}%
\begin{subfigure}[b]{0.32\textwidth}
\centering
\includegraphics[width=\textwidth]{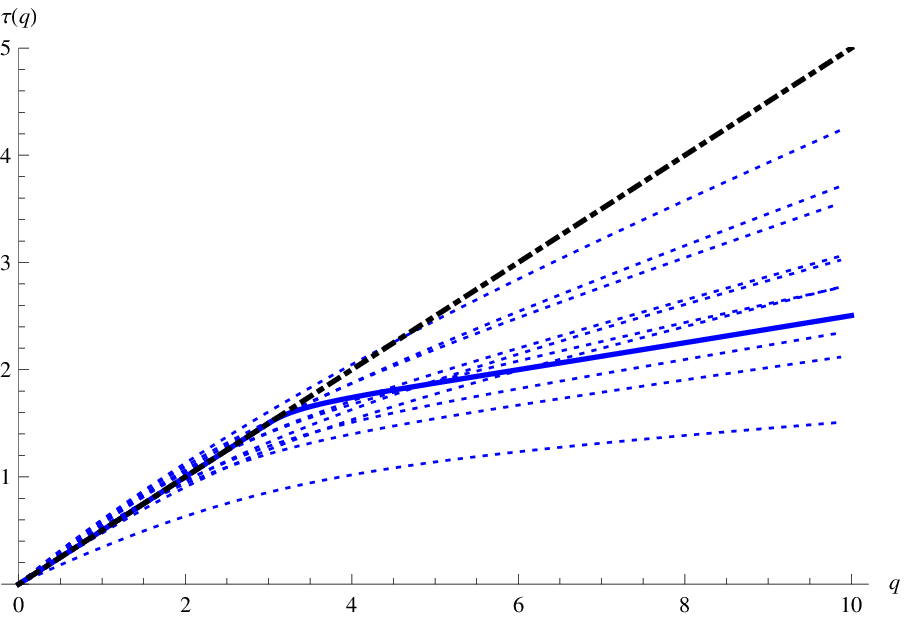}
\caption{Student $t(3)$  i.i.d.}
\label{Ex1b}
\end{subfigure}
\begin{subfigure}[b]{0.32\textwidth}
\centering
\includegraphics[width=\textwidth]{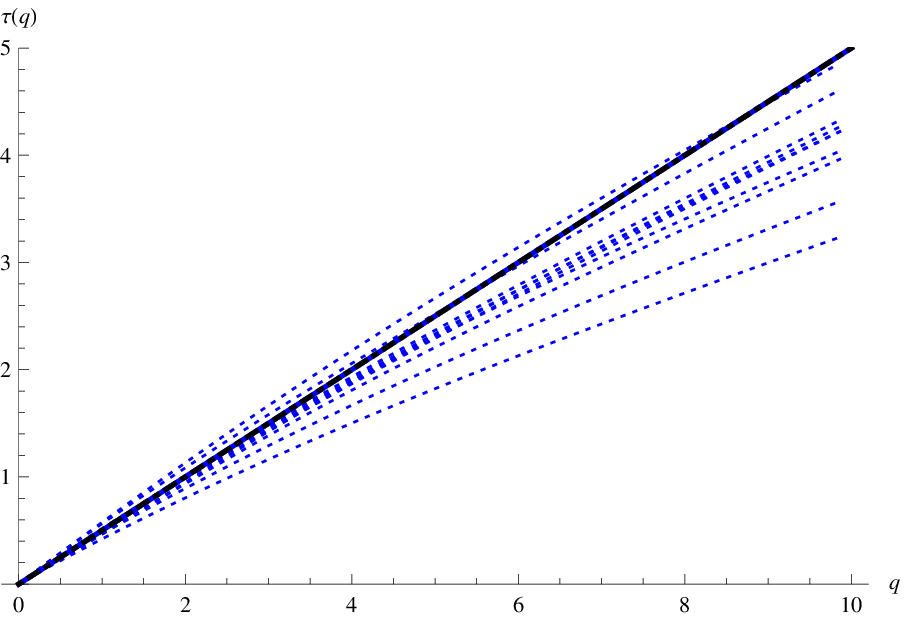}
\caption{$\mathcal{N}(0,1)$  i.i.d.}
\label{Ex1c}
\end{subfigure}
\begin{subfigure}[b]{0.32\textwidth}
\centering
\includegraphics[width=\textwidth]{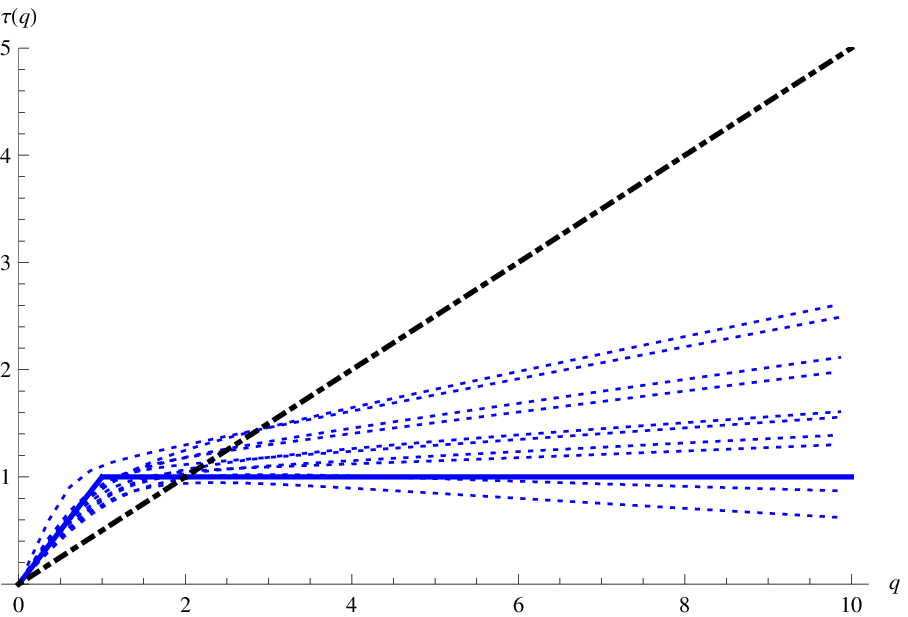}
\caption{Stable$(1)$ OU}
\label{Ex1d}
\end{subfigure}
\begin{subfigure}[b]{0.32\textwidth}
\centering
\includegraphics[width=\textwidth]{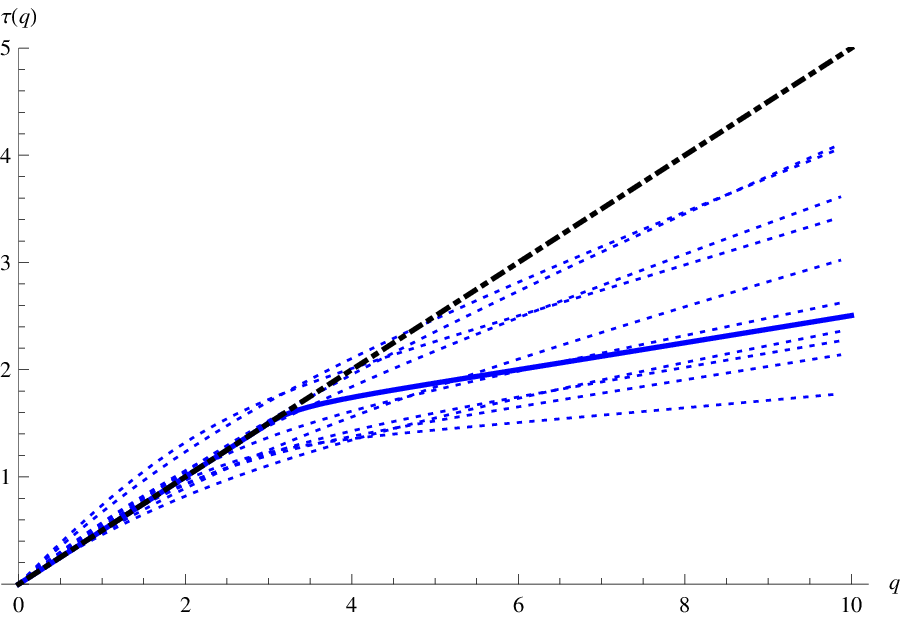}
\caption{Student $t(3)$ OU}
\label{Ex1e}
\end{subfigure}
\begin{subfigure}[b]{0.32\textwidth}
\centering
\includegraphics[width=\textwidth]{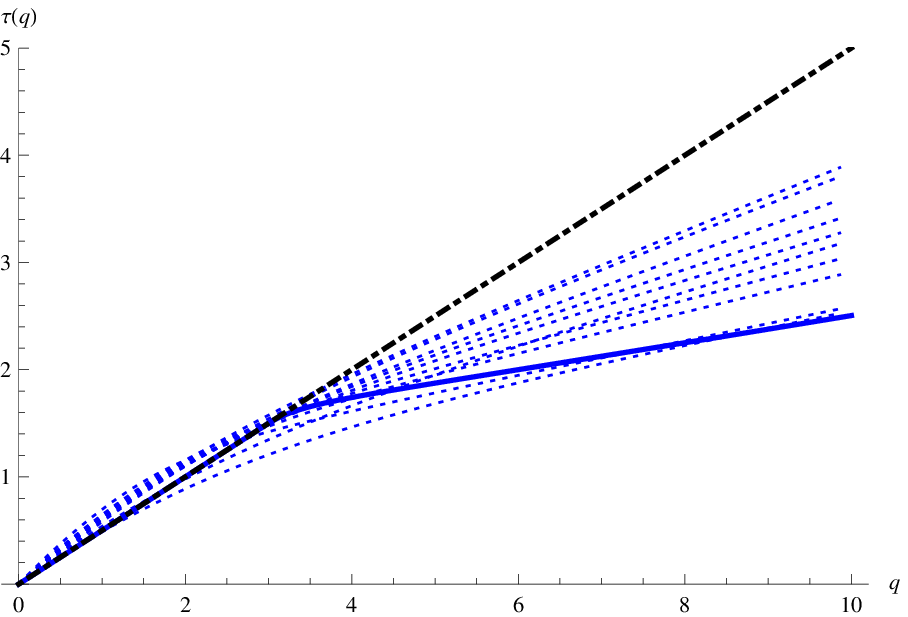}
\caption{Student $t(3)$ diffusion}
\label{Ex1f}
\end{subfigure}
\end{figure}

\section{Examples}
In this section we provide several examples to illustrate how the proposed methods work. We also compare the methods with existing ones.

Hill's estimator is the best known estimator of the tail index. For what follows, let $X_{(1)}\geq X_{(2)}\geq \cdots\geq X_{(n)}$ denote the order statistics of the sample $X_1,X_2,\cdots\,X_n$. For $1\leq k \leq n$, the Hill estimator based on $k$ upper order statistics is
\begin{equation}\label{Hillestimator}
\hat{\alpha}_{k}^{Hill}={\left(\frac{1}{k}\sum_{i=1}^{k}{\log{\frac{X_{(i)}}{X_{(k+1)}}}}\right)}^{-1}.
\end{equation}
Hill's estimator posses many desirable asymptotic properties, provided $k=k(n)$ is a sequence satisfying $\lim_{n\to\infty}{k(n)}=\infty$, and $\lim_{n\to\infty}{(k(n)/n)}=0$ (see, e.g. \cite{embrechts1997modelling}).

Another estimator of the tail index is the so called moment estimator proposed by \cite{dekkerseinmahldehaan1989}. Define for $r=1,2$
$$H_k^{(r)}=\frac{1}{k}\sum_{i=1}^{k} \left( \log{\frac{X_{(i)}}{X_{(k+1)}}} \right)^r.$$
A moment estimator based on $k$ order statistics is given by
\begin{equation}\label{Momentestimator}
\hat{\alpha}_{k}^{M}={\left( 1 + H_k^{(1)} + \frac{1}{2} \left( \frac{(H_k^{(2)})^2}{H_k^{(1)}} - 1 \right)^{-1}   \right)}^{-1}.
\end{equation}

Both equations actually yield a sequence of estimated values for different values of $k$. There exist several methods on how to choose $k$ in order to minimize asymptotic mean squared error. See \cite{beirlant2006} for a survey of these methods. Another possibility is to plot estimated values against $k$. Heuristic rule is to look for the place where the graph stabilizes and report this as the estimated value. For the Hill estimator this is usually called the Hill plot. We use this approach in the following examples to analyze the behaviour of the tail.

It is important to stress out that Hill plots cannot be used as graphical tool for establishing heavy tail property of the data as this can be misleading in cases when the tails are light. On the other hand, plots of the moment estimator can be used for this purpose (see \cite{resnick1997}). There are other exploratory tools for inspecting whether tails are heavy or not. One of the most frequently used tools is a variation of the QQ plot. By choosing $1\leq k \leq n$ one can plot the points
$$\left( - \ln \left( \frac{i}{k+1} \right) , \ln X_{(i)} \right), \quad i=1,\dots,k.$$
If the data is heavy-tailed with index $\alpha$ the plot should be roughly linear with slope $1/\alpha$. This is no more than the standard QQ plot of log-transformed data on exponential quantiles. This graphical method can be used to define estimator (\cite{resnick1996}), however, we will use it only as a exploratory tool. For $k=n$ this plot is sometimes called Zipf's plot. We will refer to it simply as the QQ plot.

\subsection{Example 1 - non heavy-tailed data}
With this example we try to illustrate the potential of scaling functions as a graphical method that can distinguish between heavy-tailed and light-tailed scenario. Different methods are tested on a random sample from standard normal distribution of size $2000$. Results are shown on Figure \ref{Ex1}. QQ plot for $500$ largest data points exhibits nonlinearity thus indicating that Pareto type tail is not a good fit for the data (Figure \ref{Ex1a}). For the purpose of tail detection we plot $1/\alpha$ values for the moment estimator. This is a more general extreme value index (see e.g. \cite{embrechts1997modelling} for details). From Figure \ref{Ex1b} one can see that plot stabilizes at a negative value near zero. This is indication of light tails. The scaling function shown on Figure \ref{Ex1c} is completely in accordance with this analysis. Indeed, it almost coincides with the baseline that corresponds to non heavy-tailed data.

\begin{figure}[H]
\centering
\caption{Example 1 - non heavy-tailed data}
\label{Ex1}
\begin{subfigure}[b]{0.32\textwidth}
\centering
\includegraphics[width=\textwidth]{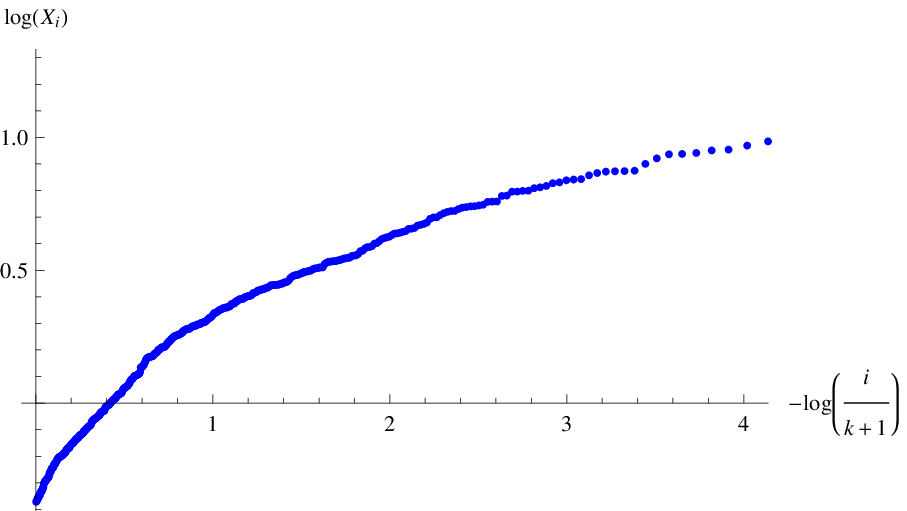}
\caption{QQ plot ($k=500$)}
\label{Ex1a}
\end{subfigure}%
\begin{subfigure}[b]{0.32\textwidth}
\centering
\includegraphics[width=\textwidth]{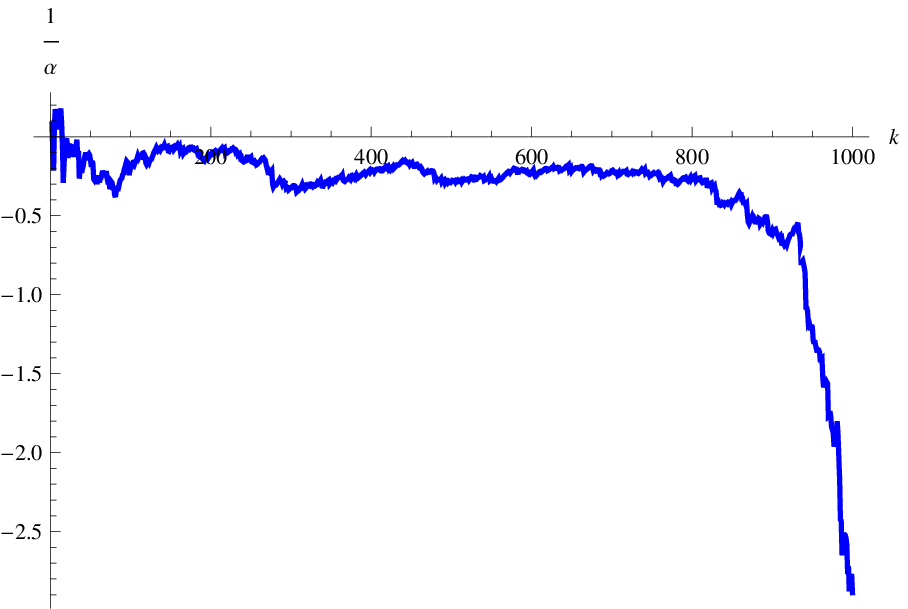}
\caption{Moment estimator plot}
\label{Ex1b}
\end{subfigure}
\begin{subfigure}[b]{0.32\textwidth}
\centering
\includegraphics[width=\textwidth]{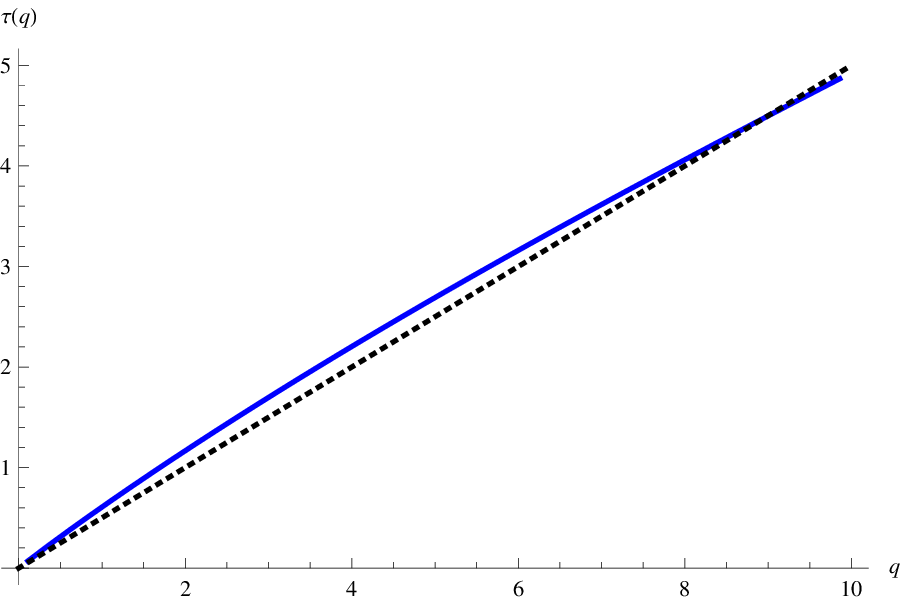}
\caption{Scaling function}
\label{Ex1c}
\end{subfigure}
\end{figure}

\subsection{Example 2 - departure from Pareto tail}
Hill's estimator, as well as many others, is known to behave poorly if the slowly varying function in the tail is far away from constant. We compare this behaviour with the performance of the scaling function estimator. Consider two distribution $F_1,F_2$ defined by their survival functions
\begin{align}
\overline{F}_1(x)=1-F_1(x)=\frac{1}{x^{\frac{1}{2}}}, \quad x \geq 1,\\
\overline{F}_2(x)=1-F_2(x)=\frac{e^{\frac{1}{2}}}{x^{\frac{1}{2}} \ln x}, \quad x \geq e.
\end{align}
Both distributions are heavy-tailed with tail index equal to $1/2$. We generate samples from these two distributions with $5000$ observations. The corresponding Hill plots are shown in Figure \ref{Ex2a}. While for the Pareto distribution $F_1$ Hill provides very good results, for $F_2$ it is impossible to draw any conclusion about the value of the tail index. The plot fails to stabilize at some value and produces a departure from the true index value. This is sometimes called Hill horror plot (see \cite{embrechts1997modelling}). The result is similar with the moment estimator: a non-constant slowly varying function in the tail produces a significant bias, as shown on Figure \ref{Ex2b}.

Figure \ref{Ex2c} shows the empirical scaling functions for the same samples together with the theoretical one and the baseline. One can see that scaling functions are very close to the theoretical one, especially in the first part of the plot, before the breakpoint. It seems that non-constant slowly varying function affects the estimation but the effect is not so dramatic as for the other two estimator. Calculating estimates using \eqref{alphaMethod} yields values $\hat{\alpha}_1=0.53$ and $\hat{\alpha}_2=0.67$.

\begin{figure}[H]
\centering
\caption{Example 2 - departure from Pareto tail}
\label{Ex2}
\begin{subfigure}[b]{0.32\textwidth}
\centering
\includegraphics[width=\textwidth]{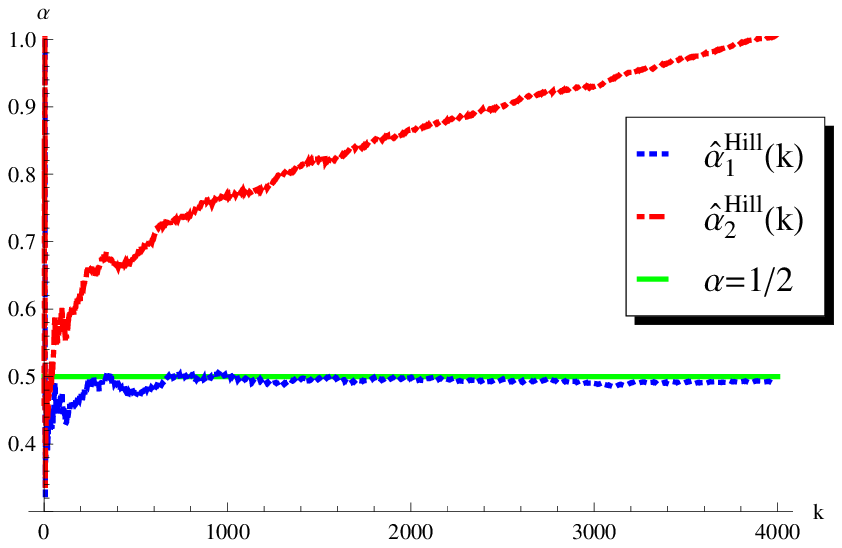}
\caption{Hill plot}
\label{Ex2a}
\end{subfigure}%
\begin{subfigure}[b]{0.32\textwidth}
\centering
\includegraphics[width=\textwidth]{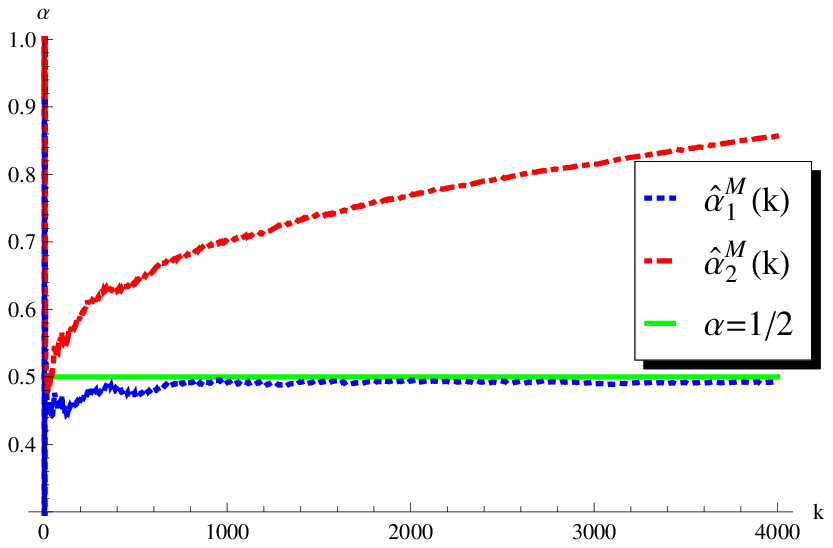}
\caption{Moment estimator plot}
\label{Ex2b}
\end{subfigure}
\begin{subfigure}[b]{0.32\textwidth}
\centering
\includegraphics[width=\textwidth]{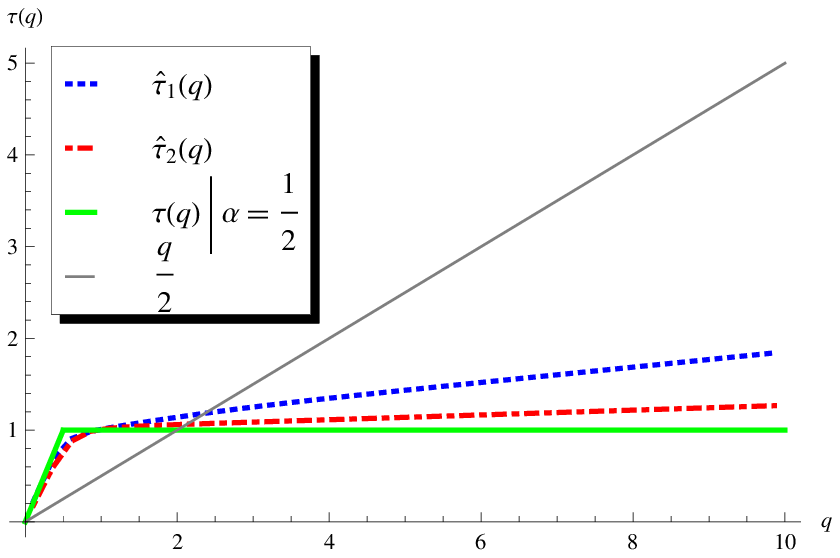}
\caption{Scaling function}
\label{Ex2c}
\end{subfigure}
\end{figure}

\subsection{Example 3 - Danish fire insurance claims}
Finally, we present a practical example. This example is similar to Example 6.2.9 from \cite{embrechts1997modelling}. The data corresponds to Danish fire insurance claims in the period from 1980 to 1990. There are $2167$ observations and the amounts are in millions of Danish Kroner.\footnote{The data can be obtained from: http://www.ma.hw.ac.uk/~mcneil/data.html} The analysis made in \cite{embrechts1997modelling} suggests a tail index estimate around $1.618$ (see Example 6.4.5). Hill and moment estimator plots (figures \ref{Ex3a} and \ref{Ex3b}) confirm the index value is around $1.5$. Empirical scaling function calculated from the data together with the baseline is shown on Figure \ref{Ex3c}. The data has been demeaned to adjust to the assumptions of the Theorem \ref{thm:main}. The scaling function is approximately bilinear: the first part of the plot has slope greater than the baseline and second part is nearly horizontal. This points out that the variance is infinite. The breakpoint occurs at around $1.5$, which indicates the possible value of the tail index. Estimating by Equation \eqref{alphaMethod} yields the value $1.419$, consistent with the previous analysis done in \cite{embrechts1997modelling}.

\begin{figure}[H]
\centering
\caption{Example 3 - Danish fire insurance claims}
\label{Ex3}
\begin{subfigure}[b]{0.32\textwidth}
\centering
\includegraphics[width=\textwidth]{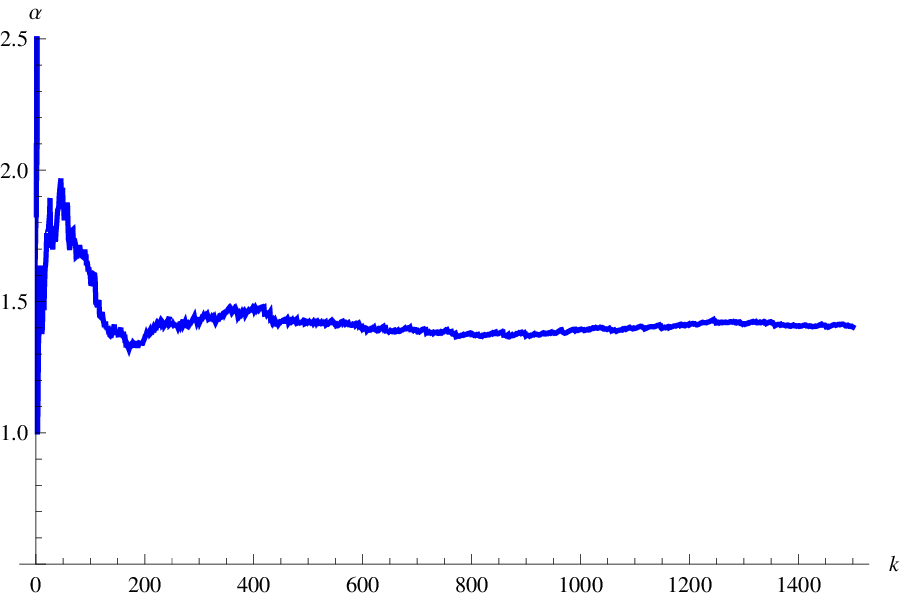}
\caption{Hill plot}
\label{Ex3a}
\end{subfigure}%
\begin{subfigure}[b]{0.32\textwidth}
\centering
\includegraphics[width=\textwidth]{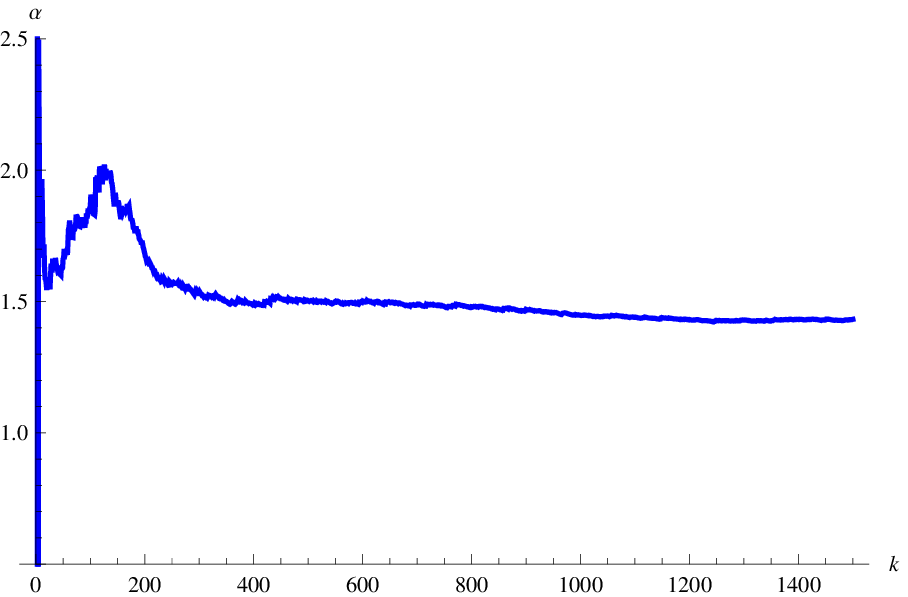}
\caption{Moment estimator plot}
\label{Ex3b}
\end{subfigure}
\begin{subfigure}[b]{0.32\textwidth}
\centering
\includegraphics[width=\textwidth]{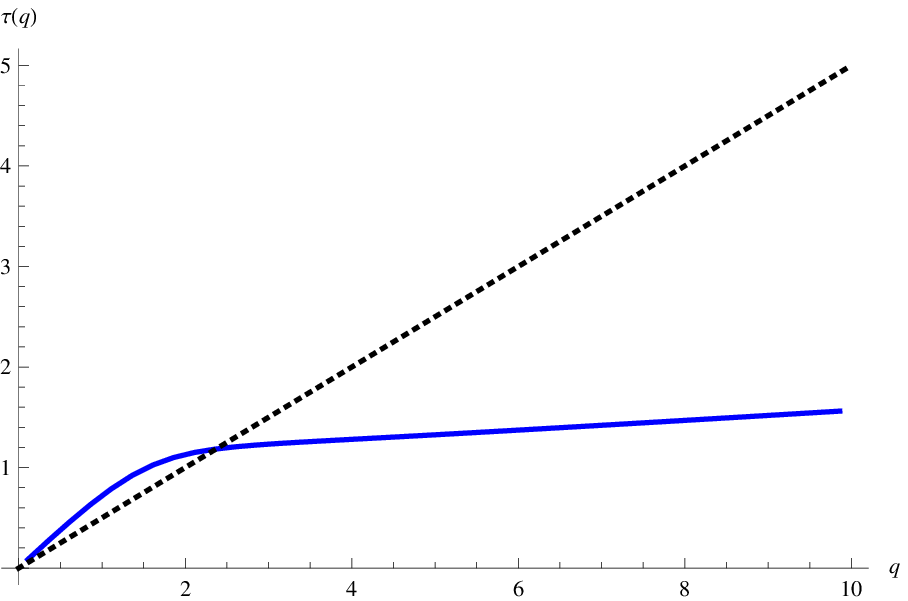}
\caption{Scaling function}
\label{Ex3c}
\end{subfigure}
\end{figure}

\section{Summary and discussion}
In the first part of the paper we present results on the asymptotic behaviour of a special kind of moment statistic. This behaviour is strongly influenced by the existence of the moments of the underlying distribution. The limiting behaviour involves a nice interplay between two classical results from probability theory: the law of large numbers and a generalized central limit theorem. Established results provide a deeper insight into the rate of divergence of sample moments. From a probabilistic point of view, the result is interesting in its own matter.

In the second part we discuss possible applications in the context of  tail index estimation. We establish a quantity called the scaling function which can be though of as the signature of heavy tails. It has the ability to reflect tail properties on a single plot. This property is used for establishing the procedures for investigating tail behaviour and estimating tail index.

\section{Proofs}\label{proofs}
The following is a version of Rosenthal's inequality for strong mixing sequences needed in the proof, precisely Theorem 2 in section 1.4.1 of \cite{doukhan1994}:

\begin{lemma}\label{lemma:Rosenthal}
Fix $q>0$ and suppose $(Y_k)$ is a sequence of random variables and let $a_Y(m)$ be the corresponding strong mixing coefficient function. Suppose that there exists $\zeta>0$ and $c\geq q, c \in \mathbb{N}$ such that
\begin{equation*}
\sum_{m=1}^{\infty} (m+1)^{2c-2} \left(a_Y(m)\right)^{\frac{\zeta}{2c+\zeta}} < \infty,
\end{equation*}
and suppose $E|Y_k|^{q+\zeta}<\infty$ and $Y_k$ are centered for all $k$. Then there exists some constant $K$ depending only on $q$ and $a_Y(m)$ such that
\begin{equation*}
E\left| \sum_{k=1}^l Y_k \right|^q \leq K D(q,\zeta,l),
\end{equation*}
where
\begin{equation*}
D(q,\zeta,l)=
\begin{cases}
L(q,0,l), & \text{if } 0<q\leq 1,\\
L(q,\zeta,l), & \text{if } 1<q\leq 2,\\
\max \left\{ L(q,\zeta,l), \left( L(2,\zeta,l) \right)^{\frac{q}{2}} \right\}, & \text{if } q> 2,
\end{cases}
\end{equation*}
\begin{equation*}
L(q,\zeta,l)= \sum_{k=1}^l \left( E \left| Y_k \right|^{q+\zeta} \right)^{\frac{q}{q+\zeta}}.
\end{equation*}
\end{lemma}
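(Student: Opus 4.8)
The plan is essentially to quote the result: it is Theorem~2 in Section~1.4.1 of \cite{doukhan1994}. Nevertheless, let me indicate how I would reconstruct the argument, since its shape explains the form of the bound and of the hypothesis.

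First I would dispatch the small exponents. For $0<q\le 1$ I would use only subadditivity of $x\mapsto x^{q}$: pointwise $\big|\sum_{k=1}^{l}Y_k\big|^{q}\le\sum_{k=1}^{l}|Y_k|^{q}$, so $E\big|\sum_{k=1}^{l}Y_k\big|^{q}\le\sum_{k=1}^{l}E|Y_k|^{q}=L(q,0,l)$ with constant $1$ (no mixing or centering needed here). For $1<q\le 2$ I would either interpolate down to the variance case $q=2$ or run the same block construction as below; in either case the dependence forces the slightly higher moment $q+\zeta$ into the cross terms (through a covariance inequality), which is exactly why the target is $L(q,\zeta,l)$ rather than the von Bahr--Esseen bound $L(q,0,l)$.

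The substantive case $q>2$ I would handle by the big-block/small-block decomposition. Fix a big-block length $p$ and a gap length $r$ and split $\{1,\dots,l\}$ into alternating blocks $B_1,I_1,B_2,I_2,\dots$ with $|B_j|=p$, $|I_j|=r$; with $U_j=\sum_{k\in B_j}Y_k$ one gets $\sum_{k=1}^{l}Y_k=\sum_j U_j+\sum_j\sum_{k\in I_j}Y_k=:V+W$ and $E\big|\sum_{k=1}^l Y_k\big|^{q}\le 2^{q-1}\big(E|V|^{q}+E|W|^{q}\big)$. The residual $W$ uses only about $lr/(p+r)$ of the original variables, so I would bound it by induction on $l$. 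For $V$: the big-block sums $U_j$ are separated by gaps of length $r$, so $(U_j)_j$ is $\alpha$-mixing with coefficients controlled by $a_Y$ at arguments $\ge r$; I would couple $(U_j)_j$ with an independent sequence $(\widetilde U_j)_j$ of the same marginals (a Berbee/Bradley-type coupling, with $L^{q}$-cost bounded by a power of $a_Y(r)$ times the $(q+\zeta)$-moments) and then apply the \emph{i.i.d.} Rosenthal inequality $E\big|\sum_j\widetilde U_j\big|^{q}\le C_q\max\{\sum_j E|\widetilde U_j|^{q},(\sum_j E\widetilde U_j^{2})^{q/2}\}$ --- which I would in turn prove by symmetrisation, Khintchine's inequality conditionally on the $\widetilde U_j$, and a Marcinkiewicz-type bound on $E(\sum_j\widetilde U_j^{2})^{q/2}$ (or, equivalently, via the martingale Burkholder--Davis--Gundy inequality). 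Since $E|\widetilde U_j|^{q}=E|U_j|^{q}$ and $E\widetilde U_j^{2}=E U_j^{2}$, these block moments are estimated by the same inequality applied inside one block of length $p$, in terms of $\sum_{k\in B_j}(E|Y_k|^{q+\zeta})^{q/(q+\zeta)}$ and $\sum_{k\in B_j}(E|Y_k|^{2+\zeta})^{2/(2+\zeta)}$; collecting everything and optimising over $p$ and $r$ reproduces $D(q,\zeta,l)=\max\{L(q,\zeta,l),(L(2,\zeta,l))^{q/2}\}$.

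The hard part is closing this recursion at exactly the claimed rate with a single constant $K$: one has to choose $p,r$ as functions of $l$ (and of the mixing rate) so that the i.i.d.\ Rosenthal term on the coupled big blocks, the coupling error, and the inductive small-block bound are all $\lesssim D(q,\zeta,l)$, and keeping track of how the $\alpha$-mixing coefficients accumulate through the dyadic recursion --- the exponent $\zeta/(2c+\zeta)$ coming from a Davydov--Rio covariance bound and the polynomial weight $(m+1)^{2c-2}$ from counting interactions at distance $m$ --- is precisely what forces the summability hypothesis. Constructing good couplings under $\alpha$-mixing (weaker than $\beta$-mixing) is the genuine technical obstacle. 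For the present paper I would not redo any of this: the inequality is quoted from \cite{doukhan1994} and is used only to control moments of truncated block sums in the proof of Theorem~\ref{thm:main}.
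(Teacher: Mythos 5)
Your proposal matches the paper exactly: the lemma is not proved there but simply quoted as Theorem~2 of Section~1.4.1 of \cite{doukhan1994}, which is precisely what you do. Your additional sketch of the blocking/coupling reconstruction is a reasonable outline of the standard argument but is not required, since the paper itself offers no proof beyond the citation.
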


\bigskip

\begin{proof}[Proof of theorem \ref{thm:main}]
We split the proof into three parts depending whether $q>\alpha$, $q<\alpha$ or $q=\alpha$.\\
(a) Let $q>\alpha$. First we show an upper bound for the limit in probability.\\
Let $\epsilon>0$. Notice that
$$n^{\frac{\ln S_q(n,n^s)}{\ln n}}=S_q(n,n^s)=\frac{1}{\lfloor n^{1-s} \rfloor} \sum_{i=1}^{\lfloor n^{1-s} \rfloor} \left| \sum_{j=1}^{\lfloor n^s \rfloor} X_{\lfloor n^s \rfloor(i-1)+j} \right|^q.$$
Let $\delta>0$ and define
\begin{align*}
Y_{j,n} &= X_j \mathbf{1}\left( |X_j|\leq n^{\frac{1}{\alpha}+\delta} \right), \quad j=1,\dots,n, \ n\in \mathbb{N},\\
Z_{j,n} &= Y_{j,n} - EY_{j,n},\\
\xi_i &= \left| \sum_{j=1}^{\lfloor n^s \rfloor} Z_{n^s(i-1)+j, n} \right|^q, \quad i=1,\dots, \lfloor n^{1-s} \rfloor.
\end{align*}
Before splitting the cases based on different $\alpha$ values, we derive some facts that will be used later. Due to stationarity, for fixed $n$, the $\xi_i$'s are identically distributed so that $E\left[ \frac{1}{k} \sum_{i=1}^k \xi_i \right] = E[\xi_1]$. $Y_{j,n}$ has finite moments of all orders and by using Karamata's theorem (\cite{embrechts1997modelling}), for arbitrary $r>\alpha$ it follows
\begin{equation}\label{momentbound}
\begin{aligned}
&E|Z_{j,n}|^r \leq E|Y_{j,n}|^r = \int_0^{\infty} P(|Y_{j,n}|^r>x)dx = \int_0^{n^{r(\frac{1}{\alpha} + \delta)}} P(|X_j|^r>x)dx\\
&= \int_0^{n^{r(\frac{1}{\alpha} + \delta)}} L(x^{\frac{1}{r}}) x^{-\frac{\alpha}{r}} dx \leq C_1 n^{r(\frac{1}{\alpha} + \delta)(-\frac{\alpha}{r}+1)} = C_1 n^{\frac{r}{\alpha}-1+\delta (r - \alpha)}
\end{aligned}
\end{equation}

Next, notice that, for fixed $n$, $Z_{j,n}, j=1,\dots,n$ is a stationary sequence. By definition $EZ_{j,n}=0$ and also $E|Z_{j,n}|^{q+\zeta}<\infty$ for every $\zeta>0$. Since $Z_{j,n}$ is no more than a measurable transformation of $X_j$, the mixing properties of $Z_{j,n}$ are inherited from those of sequence $X_j$. This means that there exists a constant $b>0$ such that the mixing sequence $a_X (m)=a_Z(m)=O(e^{-bm})$ as $m\to \infty$. It follows that
$$\sum_{m=1}^{\infty} (m+1)^{2c-2} \left(a_Z(m)\right)^{\frac{\zeta}{2c+\zeta}} \leq \sum_{m=1}^{\infty} (m+1)^{2c-2} K_1 e^{-bm\frac{\zeta}{2c+\zeta}}< \infty$$
for every choice of $c\in \mathbb{N}$ and $\zeta>0$. Hence we can apply Lemma \ref{lemma:Rosenthal} for $n$ fixed to get
\begin{equation}
E \xi_1 = E \left| \sum_{j=1}^{\lfloor n^s \rfloor} Z_{j,n} \right|^q  \leq
\begin{cases}
K L(q,0,\lfloor n^s \rfloor), & \text{if } 0<q\leq 1,\\
K L(q,\zeta,\lfloor n^s \rfloor), & \text{if } 1<q\leq 2,\\
K \max \left\{ L(q,\zeta,\lfloor n^s \rfloor), \left( L(2,\zeta,\lfloor n^s \rfloor) \right)^{\frac{q}{2}} \right\}, & \text{if } q> 2.
\end{cases}
\end{equation}
Notice that none of the previous arguments uses assumptions on $\alpha$. Now we split the cases:
\begin{itemize}
\item[\textbullet$\alpha>2$] Because for $q>\alpha$, utilizing Equation \eqref{momentbound}, we can choose $\zeta$ so small such that $\zeta<q \delta \alpha$ (in order to achieve $n^{-\frac{q}{q+\zeta}(1+\delta \alpha)} < n^{-1}$)  to obtain
\begin{align}
L(q,\zeta,\lfloor n^s \rfloor) &=  \sum_{j=1}^{\lfloor n^s \rfloor} \left( E \left| Z_{j,n} \right|^{q+\zeta} \right)^{\frac{q}{q+\zeta}} \leq C_2 n^s n^{\left(\frac{q+\zeta}{\alpha}-1+\delta (q+\zeta - \alpha)\right)\left( \frac{q}{q+\zeta} \right)} \nonumber \label{1casebound} \\
&= C_2 n^{s+\frac{q}{\alpha}-\frac{q}{q+\zeta}(1+\delta \alpha ) + \delta q} \leq C_2 n^{s+\frac{q}{\alpha}-1+\delta q}, \\
\left( L(2,\zeta,\lfloor n^s \rfloor) \right)^{\frac{q}{2}} &=  \left( \sum_{j=1}^{\lfloor n^s \rfloor} \left( E \left| Z_{j,n} \right|^{2+\zeta} \right)^{\frac{2}{2+\zeta}} \right)^{\frac{q}{2}} \nonumber \\
&\leq n^{\frac{sq}{2}} \left( E|X_1|^{2+\zeta}\right)^{\frac{q}{2+\zeta}} \leq C_3 n^{\frac{sq}{2}} \nonumber.
\end{align}
Hence $E \xi_1 \leq C_4 n^{\max\left\{ s+\frac{q}{\alpha}-1+\delta q, \frac{sq}{2} \right\}}$.

\item[\textbullet$1<\alpha \leq2$] Bound for  $L(q,\zeta,\lfloor n^s \rfloor)$ is the same as in \eqref{1casebound}, so if $\alpha<q\leq 2$ we have $E \xi_1 \leq K  L(q,\zeta,\lfloor n^s \rfloor) \leq C_4 n^{s+\frac{q}{\alpha}-1+\delta q}$. If $q>2$, using Equation \eqref{momentbound} and choosing $\zeta<2 \delta \alpha$, yields
\begin{align*}
\left( L(2,\zeta,\lfloor n^s \rfloor) \right)^{\frac{q}{2}} &=  \left( \sum_{j=1}^{\lfloor n^s \rfloor} \left( E \left| Z_{j,n} \right|^{2+\zeta} \right)^{\frac{2}{2+\zeta}} \right)^{\frac{q}{2}} \leq n^{\frac{sq}{2}} \left( C_1 n^{\frac{2+\zeta}{\alpha} -1 +\delta (2+\zeta - \alpha)} \right)^{\frac{q}{2+\zeta}}\\
&\leq C_5 n^{\frac{sq}{2}+\frac{q}{\alpha}-\frac{q}{2+\zeta}(1+\delta \alpha) +\delta q} \leq C_5 n^{\frac{sq}{2}+\frac{q}{\alpha}-\frac{q}{2}+\delta q}.
\end{align*}
But, for $q>2$
$$s+\frac{q}{\alpha}-1 - \frac{sq}{2} - \frac{q}{\alpha} + \frac{q}{2}=\left( 1-\frac{q}{2} \right) \left( s-1 \right)>0,$$
so, $s+\frac{q}{\alpha}-1 > \frac{sq}{2} + \frac{q}{\alpha} - \frac{q}{2}$ and
$$K \max \left\{ L(q,\zeta,\lfloor n^s \rfloor), \left( L(2,\zeta,\lfloor n^s \rfloor) \right)^{\frac{q}{2}} \right\} \leq C_6 n^{s+\frac{q}{\alpha}-1+\delta q}.$$
We conclude that for every $q>\alpha$, $E \xi_1 \leq C_7 n^{s+\frac{q}{\alpha}-1+\delta q}$.

\item[\textbullet $ 0<\alpha \leq 1$] If $q>1$ we can repeat arguments from the previous case. If $\alpha<q \leq 1$, again by \eqref{momentbound}
$$L(q,0,\lfloor n^s \rfloor)=\sum_{j=1}^{\lfloor n^s \rfloor} E|Z_{j,n}|^q \leq C_2 n^{s+\frac{q}{\alpha}-1+\delta (q-\alpha)} \leq C_2 n^{s+\frac{q}{\alpha}-1+\delta q} ,$$
so, for every $q>\alpha$, $E \xi_1 \leq C_8 n^{s+\frac{q}{\alpha}-1+\delta q}$.\\
\end{itemize}

Next, notice that
$$P \left( \max_{i=1,\dots,n} |X_i| > n^{\frac{1}{\alpha} + \delta} \right) \leq \sum_{i=1}^n P \left(|X_i| > n^{\frac{1}{\alpha} + \delta} \right)
\leq n \frac{L(n^{\frac{1}{\alpha}+\delta})}{(n^{\frac{1}{\alpha}+\delta})^{\alpha}}
\leq C_9 \frac{L(n^{\frac{1}{\alpha}+\delta})}{n^{\alpha \delta}},$$
\begin{align*}
S_q(n,n^s) &= \frac{1}{\lfloor n^{1-s} \rfloor} \sum_{i=1}^{\lfloor n^{1-s} \rfloor} \left| \sum_{j=1}^{\lfloor n^s \rfloor} \left( X_{\lfloor n^s \rfloor(i-1)+j} +  EY_{\lfloor n^s \rfloor(i-1)+j, n} - EY_{\lfloor n^s \rfloor(i-1)+j, n} \right) \right|^q\\
&\leq \max \{1,2^{q-1}\}  \frac{1}{\lfloor n^{1-s} \rfloor} \sum_{i=1}^{\lfloor n^{1-s} \rfloor} \left| \sum_{j=1}^{\lfloor n^s \rfloor} \left( X_{\lfloor n^s \rfloor(i-1)+j} -  EY_{\lfloor n^s \rfloor(i-1)+j, n} \right) \right|^q \\
&\qquad + \max \{1,2^{q-1}\} \frac{1}{\lfloor n^{1-s} \rfloor} \sum_{i=1}^{\lfloor n^{1-s} \rfloor} \left| \sum_{j=1}^{\lfloor n^s \rfloor} EY_{\lfloor n^s \rfloor(i-1)+j, n} \right|^q \\
&\leq \max \{1,2^{q-1}\}  \frac{1}{\lfloor n^{1-s} \rfloor} \sum_{i=1}^{\lfloor n^{1-s} \rfloor} \left| \sum_{j=1}^{\lfloor n^s \rfloor} \left( X_{\lfloor n^s \rfloor(i-1)+j} -  EY_{\lfloor n^s \rfloor(i-1)+j, n} \right) \right|^q  \\
&\qquad + \max \{1,2^{q-1}\} C_1 n^{s+\frac{q}{\alpha}-1+\delta q}.
\end{align*}
By partitioning on the  event $\{X_i=Y_i, \forall i \}$ and its complement, using Markov's inequality and preceding results we conclude for the case $\alpha>2$:
\begin{align*}
&P \left( \frac{\ln S_q(n,n^s)}{\ln n} > \max\left\{ s+\frac{q}{\alpha}-1, \frac{sq}{2} \right\} +\delta q+ \epsilon \right)\\
&=P \left( S_q(n,n^s) > n^{\max\left\{ s+\frac{q}{\alpha}-1, \frac{sq}{2} \right\} +\delta q+ \epsilon} \right) \\
&\leq P \Bigg( \max \{1,2^{q-1}\}  \frac{1}{\lfloor n^{1-s} \rfloor} \sum_{i=1}^{\lfloor n^{1-s} \rfloor} \left| \sum_{j=1}^{\lfloor n^s \rfloor} \left( X_{\lfloor n^s \rfloor(i-1)+j} -  EY_{\lfloor n^s \rfloor(i-1)+j, n} \right) \right|^q  \\
&\qquad + \max \{1,2^{q-1}\} C_1 n^{s+\frac{q}{\alpha}-1+\delta q} > n^{\max\left\{ s+\frac{q}{\alpha}-1, \frac{sq}{2} \right\} +\delta q+ \epsilon} \Bigg) \\
&\leq P \Bigg(\max \{1,2^{q-1}\}  \frac{1}{\lfloor n^{1-s} \rfloor} \sum_{i=1}^{\lfloor n^{1-s} \rfloor} \left| \sum_{j=1}^{\lfloor n^s \rfloor} Z_{\lfloor n^s \rfloor(i-1)+j} \right|^q + \max \{1,2^{q-1}\} C_1 n^{s+\frac{q}{\alpha}-1+\delta q} > \\
&\qquad n^{\max\left\{ s+\frac{q}{\alpha}-1, \frac{sq}{2} \right\} +\delta q+ \epsilon} \Bigg) + P \left( \max_{i=1,\dots,n} |X_i|  > n^{\frac{1}{\alpha} + \delta} \right)\\
&\leq \frac{ \max \{1,2^{q-1}\} E \xi_1  + \max \{1,2^{q-1}\} C_1 n^{s+\frac{q}{\alpha}-1+\delta q} }{n^{\max\left\{ s+\frac{q}{\alpha}-1, \frac{sq}{2} \right\} +\delta q + \epsilon}} + C_9 \frac{L(n^{\frac{1}{\alpha}+\delta})}{n^{\alpha \delta}}\\
&\leq \frac{ C_{10} n^{\max\left\{ s+\frac{q}{\alpha}-1+\delta q, \frac{sq}{2} \right\}} }{n^{\max\left\{ s+\frac{q}{\alpha}-1, \frac{sq}{2} \right\} + \delta q + \epsilon}} + C_9 \frac{L(n^{\frac{1}{\alpha}+\delta})}{n^{\alpha \delta}} \to 0.
\end{align*}
Since $\epsilon$ and $\delta$ are arbitrary, it follows
$$\plim_{n \to \infty} \frac{\ln S_q(n,n^s)}{\ln n} \leq \max\left\{ s+\frac{q}{\alpha}-1, \frac{sq}{2} \right\}.$$ In case $\alpha\leq 2$ we can repeat the previous with $n^{s+\frac{q}{\alpha}-1+\delta q}$ instead of
$n^{\max\left\{ s+\frac{q}{\alpha}-1+\delta q, \frac{sq}{2} \right\}}$ and get
$$\plim_{n \to \infty} \frac{\ln S_q(n,n^s)}{\ln n} \leq s+\frac{q}{\alpha}-1.$$
\bigskip

We next show the lower bound in two parts.\\
We first consider the case $\alpha>2$ and assume that $s+\frac{q}{\alpha}-1\leq \frac{sq}{2}$. Denote
$$\sigma^2=\lim_{n\to \infty} \frac{E\left(\sum_{j=1}^n X_j\right)^2}{n},$$
$$\rho_n=P \left( \left| \sum_{j=1}^{\lfloor n^s \rfloor} X_{n^s(i-1)+j} \right| > n^{\frac{s}{2}} \sigma  \right).$$
Since the sequence $X_j$ is stationary and strong mixing with an exponential decaying rate and since $E|X_j|^{2+\zeta} < \infty$ for $\zeta>0$ sufficiently small, the Central Limit Theorem holds (see \cite{hallheyde1980} Corollary 5.1.) and $\sigma^2$ exists. Since $P(|\mathcal{N}(0,1)|>1)>1/4$, it follows that for $n$ large enough $\rho_n > 1/4$. Recall that if $\mathcal{M} \mathcal{B}(n,p)$ is the sum of $n$ stationary mixing indicator variables with expectation $p$ then ergodic theorem implies $\mathcal{M} \mathcal{B}(n,p)/n \to p,$ a.s.
\begin{align*}
&P \left( \frac{\ln S_q(n,n^s)}{\ln n} < \frac{sq}{2} - \epsilon \right) = P \left( S_q(n,n^s) < n^{ \frac{sq}{2} - \epsilon} \right) \\
&\leq P \left( \sum_{i=1}^{\lfloor n^{1-s} \rfloor} \left| \sum_{j=1}^{\lfloor n^s \rfloor} X_{n^s(i-1)+j} \right|^q < n^{ \frac{sq}{2} - \epsilon +1-s} \right) \\
&\leq P \left( \sum_{i=1}^{\lfloor n^{1-s} \rfloor} \mathbf{1} \left( \left| \sum_{j=1}^{\lfloor n^s \rfloor} X_{n^s(i-1)+j} \right| > n^{\frac{s}{2}} \sigma \right) < \frac{n^{ \frac{sq}{2} - \epsilon +1-s}}{n^{\frac{sq}{2}} \sigma^q} \right) \\
&= P \left( \sum_{i=1}^{\lfloor n^{1-s} \rfloor} \mathbf{1} \left( \left| \sum_{j=1}^{\lfloor n^s \rfloor} X_{n^s(i-1)+j} \right| > n^{\frac{s}{2}} \sigma \right) < \frac{n^{1-s-\epsilon }}{\sigma^q} \right)\\
&\leq P \left( \mathcal{M} \mathcal{B}(\lfloor n^{1-s} \rfloor, 1/4)  < \frac{n^{1-s-\epsilon }}{\sigma^q} \right) \to 0,
\end{align*}
hence
$$\plim_{n \to \infty} \frac{\ln S_q(n,n^s)}{\ln n} \geq \frac{sq}{2}.$$
For the second part, assume that $s+\frac{q}{\alpha}-1 > \frac{sq}{2}$. Notice that in this case it must hold $\frac{1}{\alpha}-\frac{s}{2}>0$. We can assume that $\epsilon < \frac{1}{\alpha}-\frac{s}{2}$. Indeed, otherwise we can choose $0<\tilde{\epsilon}<\frac{1}{\alpha}-\frac{s}{2}$ and continue the proof with it in place of $\epsilon$ by observing that
$$P \left( \frac{\ln S_q(n,n^s)}{\ln n} < s+\frac{q}{\alpha}-1 - \epsilon \right) \leq P \left( \frac{\ln S_q(n,n^s)}{\ln n} < s+\frac{q}{\alpha}-1 - \tilde{\epsilon} \right).$$
The main fact behind the following part of the proof is that $\sum |X_i|^q \approx \max |X_i|^q$ and that $s$ is small, which makes the blocks to grow slow. It is generally known that under the assumed mixing condition the asymptotic behaviour of partial maxima is the same as that of the associated independent sequence (see \cite{embrechts1997modelling}). This means that $\max_{j=1,\dots,n} |X_j| / n^{1/\alpha}$ converges in distribution to some positive random variable, so that
$$P \left(\max_{j=1,\dots,n} |X_j| < 2n^{\frac{1}{\alpha}-\epsilon} \right) \to 0.$$
Let $l\in \mathbb{N}$ be such that $|X_l|=\max_{j=1,\dots,n} |X_j|$. Then, for some $k \in \{1,2,\dots,\lfloor n^{1-s} \rfloor \}$ we have $l \in \mathcal{J} :=\{ \lfloor n^s \rfloor (k-1) +1,\dots, \lfloor n^s \rfloor k \}$. Assumption $\alpha>2$ ensures that $E|X_1|^{2+\zeta}<\infty$ for some $\zeta>0$. Applying Markov's inequality and then Lemma \ref{lemma:Rosenthal} yields
\begin{align*}
&P \left( \left| \sum_{j\in \mathcal{J}, j\neq l} X_j \right| > n^{\frac{1}{\alpha}-\epsilon} \right) \leq \frac{ E \left( \sum_{j\in \mathcal{J}, j\neq l} X_j  \right)^2 }{ n^{\frac{2}{\alpha}-2\epsilon}} \\
&\leq \frac{K_1 \sum_{j\in \mathcal{J}, j\neq l} \left( E|X_j|^{2+\zeta} \right)^{\frac{2}{2+\zeta}}}{ n^{\frac{2}{\alpha}-2\epsilon}} \leq \frac{K_2 n^{s}}{ n^{\frac{2}{\alpha}-2\epsilon}} = K_2 n^{s-\frac{2}{\alpha}+2\epsilon} \to 0, \quad \text{as } n\to \infty,
\end{align*}
since $s-\frac{2}{\alpha}+2\epsilon<0$ by the assumption in the proof.
Combining this it follows
\begin{align*}
&P \left( \frac{\ln S_q(n,n^s)}{\ln n} < s+\frac{q}{\alpha}-1 - \epsilon \right) = P \left( S_q(n,n^s) < n^{ s+\frac{q}{\alpha}-1 - \epsilon} \right)\\
&\leq P \left( \sum_{i=1}^{\lfloor n^{1-s} \rfloor} \left| \sum_{j=1}^{\lfloor n^s \rfloor} X_{n^s(i-1)+j} \right|^q < n^{ \frac{q}{\alpha} -q \epsilon} \right) \\
& \leq P \left( \left| \sum_{j\in \mathcal{J}} X_{j} \right|^q < n^{ \frac{q}{\alpha} -q \epsilon} \right) = P \left( \left| \sum_{j\in \mathcal{J}} X_{j} \right| < n^{ \frac{1}{\alpha} -\epsilon} \right) \\
& \leq P \left(|X_l| < 2n^{\frac{1}{\alpha}-\epsilon} \right) +  P \left( \left| \sum_{j\in \mathcal{J}, j\neq l} X_j \right| > n^{\frac{1}{\alpha}-\epsilon} \right) \to 0,
\end{align*}
as $n \to \infty$. Hence,
$$\plim_{n \to \infty} \frac{\ln S_q(n,n^s)}{\ln n} \geq \max\left\{ s+\frac{q}{\alpha}-1, \frac{sq}{2} \right\}.$$
For the case $0<\alpha \leq 2$ we just need a different estimate for the partial sum containing maximum. Choose $\gamma$ such that $0<\gamma<\alpha$. Again we use Markov's inequality
$$P \left( \left| \sum_{j\in \mathcal{J}, j\neq l} X_j \right| > n^{\frac{1}{\alpha}-\epsilon} \right) \leq \frac{ E \left|  \sum_{j\in \mathcal{J}, j\neq l} X_j \right|^{\alpha-\gamma} }{ n^{1-\alpha \epsilon - \frac{\gamma}{\alpha}+\epsilon \gamma}}.$$
From Lemma \ref{lemma:Rosenthal} one can easily bound this expectation by $K_3 n^s$ for some constant $K_3$. Choosing $\epsilon$ and $\gamma$ small enough to make $s-1+\alpha \epsilon + \frac{\gamma}{\alpha}-\epsilon \gamma<0$, we get
$$P \left( \left| \sum_{j\in \mathcal{J}, j\neq l} X_j \right| > n^{\frac{1}{\alpha}-\epsilon} \right) \leq \frac{ K_3 n^s}{ n^{1-\alpha \epsilon - \frac{\gamma}{\alpha}+\epsilon \gamma}} \to 0,  \quad \text{as } n\to \infty,$$
and this completes the (a) part of the proof.\\

\bigskip

(b) Now let $q < \alpha$.\\
We first show the upper bound on the limit, i.e. we analyse
\begin{align*}
&P \left( \frac{\ln S_q(n,n^s)}{\ln n} > \frac{sq}{\beta(\alpha)} +\epsilon \right)
=P \left( S_q(n,n^s) > n^{\frac{sq}{\beta(\alpha)} +\epsilon} \right) \\
&\leq P \left(\frac{1}{\lfloor n^{s-1} \rfloor} \sum_{i=1}^{\lfloor n^{1-s} \rfloor} \left| \sum_{j=1}^{\lfloor n^s \rfloor} X_{n^s(i-1)+j,n} \right|^q > n^{\frac{sq}{\beta(\alpha)} +\epsilon} \right)\leq \frac{  E \left| \sum_{j=1}^{\lfloor n^s \rfloor} X_{j} \right|^q }{n^{\frac{sq}{\beta(\alpha)} +\epsilon}} ,
\end{align*}
where we write $\beta(\alpha)=\alpha$ or $2$ corresponding to $\alpha\leq 2$ or $\alpha>2$.
To show that this tends to zero, we first consider the case $\alpha>2$. If $q>2$, using Lemma \ref{lemma:Rosenthal} with $\zeta$ small enough it follows
$$E \left| \sum_{j=1}^{\lfloor n^s \rfloor} X_{j} \right|^q \leq C_1 \max \{n^s, n^{\frac{sq}{2}} \}.$$
For the case $q\leq 2$ we combine Jensen's inequality with Lemma \ref{lemma:Rosenthal}
$$E \left| \sum_{j=1}^{\lfloor n^s \rfloor} X_{j} \right|^q \leq \left( E \left| \sum_{j=1}^{\lfloor n^s \rfloor} X_{j} \right|^2 \right)^{\frac{q}{2}} \leq C_2 n^{\frac{sq}{2}}.$$
In case $\alpha \leq 2$ we choose $\gamma$ small enough to make $q<\alpha-\gamma < \alpha$ and we get
$$E \left| \sum_{j=1}^{\lfloor n^s \rfloor} X_{j} \right|^q \leq \left( E \left| \sum_{j=1}^{\lfloor n^s \rfloor} X_{j} \right|^{\alpha-\gamma} \right)^{\frac{q}{\alpha - \gamma}} \leq C_3 n^{\frac{sq}{\alpha-\gamma}}.$$
We next prove the lower bound. For the case $\alpha>2$ the proof is the same as the proof of (a). Assume $\alpha\leq 2$. The arguments go along the same line, but here we avoid using limit theorems for partial sums of stationary sequences. Instead we use before mentioned asymptotic behaviour of the partial maximum, that is, we use the fact that $\max_{j=1,\dots,\lfloor n^s \rfloor} |X_j| / n^{s/\alpha}$ converges in distribution to some positive random variable. This means we can choose some constant $m>0$ such that for large enough $n$
$$P \left( \frac{\max_{j=1,\dots,\lfloor n^s \rfloor} |X_{j}| }{n^{\frac{s}{\alpha}}} > 2m \right) > \frac{1}{4}.$$
Denote $|X_l|=\max_{j=1,\dots,\lfloor n^s \rfloor} |X_j|$. Then it follows that
$$P \left( \left| \sum_{j=1}^{\lfloor n^s \rfloor} X_{j} \right| > m n^{\frac{s}{\alpha}}  \right) \geq P \left( |X_l|> 2 m n^{\frac{s}{\alpha}} \right) + P \left( \left| \sum_{j=1, j \neq l}^{\lfloor n^s \rfloor} X_{j} \right| < m n^{\frac{s}{\alpha}} \right) > \frac{1}{4}.$$
Now we conclude as before, denoting by $\mathcal{M} \mathcal{B}(n,p)$ the sum of $n$ stationary mixing indicator variables with mean $p$ and noting that ergodic theorem implies $\mathcal{M} \mathcal{B}(n,p)/n \to p>0,$ a.s.
\begin{align*}
&P \left( \frac{\ln S_q(n,n^s)}{\ln n} < \frac{sq}{\alpha} - \epsilon \right) = P \left( S_q(n,n^s) < n^{ \frac{sq}{\alpha} - \epsilon} \right) \\
&\leq P \left( \sum_{i=1}^{\lfloor n^{1-s} \rfloor} \left| \sum_{j=1}^{\lfloor n^s \rfloor} X_{n^s(i-1)+j} \right|^q < n^{ \frac{sq}{\alpha} - \epsilon +1-s} \right) \\
&\leq P \left( \sum_{i=1}^{\lfloor n^{1-s} \rfloor} \mathbf{1} \left( \left| \sum_{j=1}^{\lfloor n^s \rfloor} X_{n^s(i-1)+j} \right| > n^{\frac{s}{\alpha}} m \right) < \frac{n^{ \frac{sq}{\alpha} - \epsilon +1-s}}{n^{\frac{sq}{\alpha}} m^q} \right) \\
&\leq P \left( \sum_{i=1}^{\lfloor n^{1-s} \rfloor} \mathbf{1} \left( \left| \sum_{j=1}^{\lfloor n^s \rfloor} X_{n^s(i-1)+j} \right| > n^{\frac{s}{\alpha}} m \right) < \frac{n^{ 1-s- \epsilon}}{ m^q} \right)\\
&\leq P \left( \mathcal{M} \mathcal{B}(\lfloor n^{1-s} \rfloor, 1/4)  <  \frac{n^{ 1-s- \epsilon}}{ m^q} \right) \to 0,
\end{align*}
and this proves the lower bound.

\bigskip

(c) It remains to consider the case $q = \alpha$. But this is simple since for every $\delta>0$
$$\frac{\ln S_{q-\delta}(n,n^s)}{\ln n} \leq \frac{\ln S_q(n,n^s)}{\ln n} \leq \frac{\ln S_{q+\delta}(n,n^s)}{\ln n},$$
so the limit must be continuous in $q$ and the claim follows from previous cases.

\end{proof}

\begin{proof}[Proof of theorem \ref{thm:tauhatasymptotic}]
Fix $q>0$ and denote $y_n (s) =\ln S_q(n,n^s) / \ln n$. We first show that
\begin{equation}\label{plim}
\plim_{n \to \infty} \hat{\tau}_{N,n} (q) = \dfrac{\sum_{i=1}^{N-1} \frac{i}{N} R_{\alpha}(q,\frac{i}{N}) - \frac{1}{N-1} \sum_{i=1}^{N-1} \frac{i}{N} \sum_{j=1}^{N-1} R_{\alpha}(q,\frac{i}{N})}{ \sum_{i=1}^{N-1} \left(\frac{i}{N}\right)^2 - \frac{1}{N-1} \left( \sum_{i=1}^{N-1} \frac{i}{N} \right)^2 }.
\end{equation}
Let $\varepsilon>0$ and $\delta>0$. By Theorem \ref{thm:main}, for each $i=1,\dots,N$ there exists $n_i$ such that
$$P \left( \left| y_n(\frac{i}{N}) - R_{\alpha}(q,\frac{i}{N}) \right| > \frac{\varepsilon}{N-1} \right) < \frac{\delta}{N-1},$$
for $n \geq n_i$. Take $n_{max}= \max_{i=1,\dots,N} n_i$. Then for all $n \geq n_{max}$,
\begin{align*}
P &\left( \sum_{i=1}^{N-1} \frac{i}{N}  \left| y_n(\frac{i}{N}) - R_{\alpha}(q,\frac{i}{N}) \right| > \varepsilon \right) \leq P \left( \sum_{i=1}^{N-1} \left| y_n(\frac{i}{N}) - R_{\alpha}(q,\frac{i}{N}) \right| > \varepsilon \right) \\
&\leq (N-1) P \left( \left| y_n(\frac{i}{N}) - R_{\alpha}(q,\frac{i}{N}) \right| > \frac{\varepsilon}{N-1} \right) < \delta.
\end{align*}
This proves the convergence for two terms depending on $n$ and claim now follows by continuous mapping theorem. By dividing denominator and numerator of the fraction in limit \eqref{plim} by $1/(N-1)$, one can see all the sums involved as Riemann sums based on equidistant partition. Functions involved, $s \mapsto s R_{\alpha}(q,s)$, $s \mapsto R_{\alpha}(q,s)$, $s \mapsto s $ and $s \mapsto s^2$, are all bounded continuous on $[0,1]$, so all sums converge to integrals when partition is refined, i.e. when $N \to \infty$. Thus
\begin{equation*}
\lim_{N \to \infty} \plim_{n \to \infty} \hat{\tau}_{N,n} (q) = \dfrac{\int_{0}^{1} s R_{\alpha}(q,s) ds - \int_0^1 s ds \int_0^1 R_{\alpha}(q,s) ds }{ \int_0^1 s^2 ds - \left( \int_0^1 s ds \right)^2 }.
\end{equation*}
Solving the integrals using expression for $R_{\alpha}(q,s)$, one gets $\tau(q)$ as in \eqref{tau}.
\end{proof}

\bibliographystyle{chicago}
\bibliography{References}

\end{document}